\newcommand{\prox}{{\textbf{(Cont)}}\xspace}
\title{Asymptotic stability and ergodic properties of quantum trajectories under imperfect measurement}
\author{
    Nina H. Amini\thanks{Université Paris-Saclay, CNRS, CentraleSupélec, Laboratoire des signaux et systèmes (L2S), France}
    \and Tristan Benoist\thanks{Univ Toulouse, INUC, UT2J, INSA Toulouse, TSE, CNRS, IMT, Toulouse, France}
    \and Ma\"el
 Bompais\thanks{School of Mathematical Sciences, University of Nottingham, United Kingdom}
    \and Cl\'ement
 Pellegrini\footnotemark[2]
}
\renewcommand{\@date}{}
\begin{document}

\maketitle

\begin{abstract}
   We investigate the asymptotic stability and ergodic properties of quantum trajectories under imperfect measurement, extending previous results established for the ideal case of perfect measurement. We establish a necessary and sufficient condition ensuring the convergence of the estimated trajectory, initialized from an estimated state, to the true trajectory. This result is obtained assuming that the associated quantum channel is irreducible. Building on this, we prove the uniqueness of the invariant measure and demonstrate  convergence toward this measure.
\end{abstract}

\tableofcontents
\section{Introduction}
Quantum trajectories are random processes describing the evolution of quantum systems subject to repeated indirect measurements. An indirect measurement procedure is typically implemented by letting probes successively interact with the system of interest, and (directly) measuring them afterwards. Because the interaction between the system and the probe creates correlations (e.g. entanglement), the measurement outcome depends on the system’s state and induces a back action on it. The probes are supposed to be independent, initialized in the same state, and the resulting evolution for the state of the system is then a Markov chain on the set of density matrices.

When the probes are all prepared in a pure state and the observable measured on them has non-degenerate spectrum, we say that the measurement is \textit{perfect}. However, in experimental situations, imperfections may arise. For instance, the probes may be prepared in a mixed state, the measurement may be degenerate, the detector may be biased and provide incorrect outcomes part of the time, or the system may interact with an unmonitored environment. All these imperfections can be incorporated into the same formalism: instead of updating the state through a single Kraus operator, we use Kraus maps, i.e., completely positive trace non-increasing maps, which can be represented as weighted sums of actions by Kraus operators.

In this paper, our aim is to study the asymptotic behavior of quantum trajectories evolving with this \textit{imperfect} dynamics. In the literature, there exist many developments concerning quantum trajectories under perfect measurements, some of which we briefly review in what follows. Under a condition called \textit{purification} derived in \cite{maassen2006purification}, the trajectory almost surely purifies over time, which means that, starting from a potentially mixed state, it asymptotically reaches the set of pure states. This condition also ensures what is called the \textit{asymptotic stability} of the quantum trajectory (with respect to its initial state): when purification occurs, a wrongly initialized quantum trajectory evolving with the measurement results detected on the true system will converge to the true trajectory, see \cite{benoist2019invariant} for discrete-time measurement and \cite{benoist2021invariant,amini2021asymptotic} for continuous-time measurement. 
This asymptotic stability was used in \cite{benoist2019invariant,benoist2021invariant} to show that quantum trajectories arising from perfect and irreducible indirect measurements admit a unique invariant measure, when viewed as Markov processes.
 This result was later refined by demonstrating several limit theorems \cite{benoist2023limit,benoist2025quantum_a}, including an ergodic theorem (also referred to as Law of Large Numbers) for arbitrary continuous functions of the state, as well as a Central Limit Theorem, and a restricted Large Deviation Principle. 


By contrast, fewer results are known for quantum trajectories under imperfect measurements. Among the first important and pioneer results, an ergodic theorem was established in \cite[Section~6]{kummerer2004pathwise} for linear functions of the state.\footnote{While proved only for perfect measurements in that reference, the proof extends to imperfect measurements readily.} For quantum non-demolition (QND) imperfect quantum trajectories, one can mention the convergence to a pointer state established in \cite{bompais2023asymptotic}, as well as the asymptotic stability, i.e., the convergence of the estimated trajectory toward the same pointer state.
 Concerning the stability for imperfect measurements without the QND assumption, the fidelity between the true and the estimated trajectory is known to be a sub-martingale \cite{rouchon2011fidelity,somaraju2012design,amini2014stability}.
The question of stability in the filtering formalism has also been  studied, both in the classical setting \cite{ocone1996asymptotic,delyon1988lyapunov,van2009observability} and in its quantum counterpart \cite{handel2009stability}.

In this article, we prove in Theorem~\ref{thm:asymp_stab_imperfect} the asymptotic stability of irreducible imperfect measurement dynamics under a contractivity assumption, and we show that this assumption is a necessary and sufficient condition in the case of irreducible quantum channels. This assumption requires the existence of a sequence of Kraus maps that accumulates to a rank-one map. When restricted to perfect measurements (a special case of our setting), it is equivalent to the purification assumption (see Proposition~\ref{prop:pur_implies_cont}).
As in \cite{benoist2019invariant}, leveraging asymptotic stability, we establish the uniqueness of the invariant measure for imperfect quantum trajectories. We also show that quantum trajectories converge in distribution toward this invariant measure, as summarized in Theorem~\ref{thm:invariant_measure_convergence}. A consequence of this theorem is an ergodic theorem for continuous functions, stated as Theorem~\ref{thm:convergence_of_the_ergodic_mean}. Its proof is a direct adaptation of the arguments developed in \cite{benoist2023limit}.
Note that, compared to \cite{benoist2019invariant}, we do not obtain a convergence rate, since we do not derive a rate for asymptotic stability. Such a rate would be necessary to establish further limit theorems as in \cite{benoist2023limit, benoist2025quantum_a}.

The paper is organized as follows. Section~\ref{sec:qtraj_def} introduces the model of quantum trajectories under imperfect measurements. We set up an appropriate probability space on which these trajectories are defined and formulate the contractivity and irreducibility assumptions. Section~\ref{sec:asymptotic_stability} is devoted to the analysis of asymptotic stability. Section~\ref{sec:uniqueness} then presents our main results on the invariant measure and on the convergence in distribution of quantum trajectories toward this measure.

\section{Imperfect quantum trajectories: model and assumptions}\label{sec:qtraj_def}
In this section, we present the model of quantum trajectories that incorporates measurement imperfections. We then introduce an appropriate probability space on which these trajectories are defined. Finally, we state the two key assumptions, irreducibility and contractivity, that will be used throughout the rest of the article.

\subsection{Dynamics under imperfect measurements}

We consider a quantum system represented by a complex Hilbert space $\H$ of dimension $d<\infty$. The state space of the system is the set of density operators,
$$\S(\H)=\{ \rho \in \B(\H) ~|~\rho \geq 0,~\tr{\rho}=1 \}.$$
Here $\B(\H)$ denotes the set of linear operators on $\H$ and the notion of positivity is the semi-definite one. We equip $\S(\H)$ and its linear span with the trace norm denoted $\|\cdot\|$ and $\B(\H)$ with the operator norm denoted $\|\cdot\|_\infty$.

The system is assumed to undergo a sequence of independent indirect measurements. In this setting, a single measurement step is described by a quantum instrument $\{\Phi_i\}_{i=1}^m$, that is, a family of completely positive maps
\[
\Phi_i : \mathcal{B}(\mathcal{H}) \to \mathcal{B}(\mathcal{H}), \qquad i = 1,\dots,m,
\]
whose sum $\Phi := \sum_{i=1}^m \Phi_i$ is trace-preserving and is called a quantum channel.
Given an input state $\rho$, the probability of observing outcome $i$ is
\[
p(i|\rho) = \operatorname{tr}\!\left[\Phi_i(\rho)\right],
\]
and, conditional on this outcome, the (unnormalized) post-measurement state is $\Phi_i(\rho)$, normalized as $\Phi_i(\rho)/p(i|\rho)$ whenever $p(i|\rho) > 0$.

Iterating this mechanism generates a stochastic evolution $(\rho_n)_{n\geq 0}$, referred to as a quantum trajectory. Starting from an initial state $\rho_0$, the system evolves in discrete time. 
At each step, corresponding to an indirect measurement, the state is updated according to
\[
\rho_{n+1} = \frac{\Phi_{i}(\rho_n)}{\operatorname{tr}(\Phi_{i}(\rho_n))} 
\qquad \text{with probability } \tr{\Phi_i(\rho_n)}.
\]In particular, if $(i_1,\ldots,i_{n})$ have been observed during the first $n$ measurement steps, the state of the system at time $n$ conditioned on this sequence of outcomes is 
\[
\rho_n = \normalized{\Phi_{i_{n}} \circ \cdots \circ \Phi_{i_1} (\rho_0)}.
\]
Note that $\tr{\Phi_{i_{n}} \circ \cdots \circ \Phi_{i_1} (\rho_0)}>0$ almost surely, so the process is well defined.

The corresponding averaged evolution is obtained by averaging over the outcomes, which yields the deterministic evolution via the quantum channel:
\[
\mathbb{E}[\rho_{n+1} \mid \rho_n] = \sum_{i=1}^m \Phi_i(\rho_{n}) = \Phi(\rho_n).
\]
Iterating, it follows that
\[ \E{\rho_n}  =  \Phi^n(\rho_0). \]

In the perfect detection regime, each outcome $i$ is associated with a single Kraus operator $V_i$, so that the corresponding Kraus maps read $\Phi_i(\rho) = V_i \rho V_i^\ast.$

In a realistic setup, the detector may be biased, so the reported outcome $i$ does not necessarily coincide with the true microscopic event $j$. This bias is described by a stochastic matrix
\[
\eta = (\eta_{i,j})_{i,j}, \qquad \eta_{i,j} \geq 0\ \forall i,j,\quad \sum_i \eta_{i,j} = 1\  \forall j,
\]
where $\eta_{i,j}$ denotes the probability that the detector reports $i$ when an ideal detector would have reported $j$.
As a result, the effective measurement dynamics is no longer governed by the Kraus operators $\{V_i\}_i$, but by an instrument $\{\Phi_i\}_i$ defined as (see \cite{somaraju2012design}):
\[
\Phi_i(\rho) = \sum_{j} \eta_{i,j}V_j \rho V_j^\ast.
\]
This type of detection imperfection is routinely encountered in realistic experimental implementations; see, for example, \cite{sayrin2011real,somaraju2012design}. The framework of quantum instruments is, however, much more general and can accommodate a wide range of imperfections, including those mentioned in the introduction.

\subsection{Probability space}
We now introduce the underlying probability space, which will serve as the rigorous framework for the subsequent analysis.

Let $\N$ denote the set of natural integers $\{1,2,3,\ldots   \}.$ Let $\O=\{1,\ldots,m\}$ be the finite set of possible measurement outcomes. We set 
$$\Omega=\O^{\N}$$ 
i.e., $\Omega$ is the set of all possible infinite sequences $\omega=(i_1,i_2,i_3,\ldots)$ of measurement results. For $k \in \N$ and $(i_1,\ldots,i_{k}) \in \O^k$, we define the cylinder subset
$$C_{i_1,\ldots,i_{k}}=\{\omega \in \Omega \mid \omega_1=i_1, \ldots, \omega_{k}=i_{k} \}.$$
For a fixed $n \in \N,$ we consider the $\sigma$-algebra $\F_n$ generated by all the cylinders of size less than or equal to $n,$
$$\F_n = \sigma \oo C_{i_1,\ldots,i_{k}},~ k \leq n,~ (i_1,...,i_k)\in \O^k     \cc.$$
Then equip $\Omega$ with the $\sigma$-algebra $\F$ generated by all the $\F_n,$ $n \in \N,$
$$ \F = \underset{n \in \N}{\bigvee}  \F_n.$$
Given a state $\rho_0 \in \S(\H),$ we define the probability measure $\P^{\rho_0}{\big|_{\F_n}}$ on $(\Omega,\F_n),$ governing the trajectory up to time $n,$ by assigning the following probability to each cylinder $C_{i_1,\ldots,i_k},$ $k \leq n,$
$$\P^{\rho_0}{\big|_{\F_n}} \left(C_{i_1,\ldots,i_k} \right) = \tr{ \Phi_{i_{k}}\circ\ldots\circ\Phi_{i_1} (\rho_0) }.$$
Since $\Phi=\sum_{i=1}^m\Phi_i$ is trace-preserving and the trace is linear, the probability measures $\P^{\rho_0}{\big|_{\F_n}}$, $n\in \N,$ form a consistent family. By Kolmogorov's extension theorem, these finite-time measures extend uniquely to a probability measure $\P^{\rho_0}$ on $(\Omega,\F)$.

The quantum trajectory $(\rho_n)$ is itself a sequence of $(\F_n)$-adapted random variables, satisfying
$$ \rho_n(\omega) = \normalized{\Phi_{\omega_n} \circ \ldots \circ \Phi_{\omega_1} (\rho_0) }.$$
With respect to $\P^{\rho_0}$, $(\rho_n)$ has the law of the quantum trajectory defined by the instrument $\{\Phi_i\}_{i=1}^m$ and initial state $\rho_0\in \S(\H)$.
\subsection{Irreducibility and contractivity assumptions}
In this paper, we introduce the following two assumptions. Together, they allow us to prove the asymptotic stability of quantum trajectories. Using this result, we then establish their ergodic properties.
\begin{assumption}[\textbf{Irr}]
The quantum channel $\Phi$ is irreducible, meaning that the only non-zero orthogonal projection $P$ on $\H$ satisfying
$\Phi\bigl(P \B(\H) P\bigr) \subseteq P \B(\H) P$ is $P = \mathrm{Id}_\H.$
\end{assumption}

Equivalently -- see \cite[Theorem~3.1]{evans1978spectral} -- $\Phi$ admits a unique invariant state $\rho_{\inv}$, and this state is full-rank, i.e.
\[
\Phi(\rho_{\inv})=\rho_{\inv}
\quad\text{and}\quad
\rho_{\inv}>0.
\]
This assumption is the quantum analogue of irreducibility for classical finite state Markov chains, for which it ensures uniqueness of the invariant measure. 
It is a standard structural condition in the study of the asymptotic behavior of quantum trajectories, see e.g. \cite{kummerer2004pathwise, benoist2019invariant}.\\

Let us now state the contractivity assumption. For convenience, we introduce the notation
$$\Phi_I := \Phi_{i_n} \circ \cdots \circ \Phi_{i_1}$$
for any word $I = (i_1,\ldots,i_n) \in \O^n,~n \in \N.$

\begin{assumption}[\textbf{Cont}]
There exists a sequence $(I_n)_{n\in \N}\subset \cup_{k\in \N}\O^k$ and positive semi-definite $X, Z\in \B(\H)$ such that
$$ \frac{\Phi_{I_{n}}}{\norm{ \Phi_{I_{n}} }} \xrightarrow[n \to \infty]{}Z\tr{X~.~},$$
where $\tr{X~\cdot~} Z$ denotes the map $\rho \mapsto Z\tr{X \rho}.$
\end{assumption}
Here, the norm used is the operator norm related to the trace norm over $\S(\H)$. It is defined by $\|\Psi\| = \sup_{\rho\in \S(\H)} \|\Psi(\rho)\|$ for any positive map $\Psi$. Since $\B(\H)$ is finite-dimensional, this choice is not crucial and any other norm could be used.

This assumption is the analogue of contractivity in the classical theory of products of i.i.d. random matrices
\cite{furstenberg1963noncommuting, guivarch1980proprietes}, where a semigroup of matrices is called contractive
if it contains an element with a simple dominant singular value.
Equivalently, there exists a sequence of matrices whose normalized product converges to a rank-one operator, corresponding to a projection onto the image space associated with the dominant singular value.
Combined with strong irreducibility, this condition yields the existence of a unique stationary measure on the projective space; see \cite{bougerol} for an introduction to this topic.

\medskip

We now present two situations where \cont is satisfied.
\begin{proposition}\label{thm:PI_implies_Cont}
Assume there exists a word $I$ such that $T:=\Phi_I$ is primitive, i.e. there exists $n \in \N$ such that for all $X\geq 0$ with $X\neq 0$, one has $T^n(X)> 0$. Then \cont holds.
\end{proposition}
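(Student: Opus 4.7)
The plan is to iterate the word $I$: set $I_n := (I, I, \ldots, I)$, the concatenation of $n$ copies of $I$, so that $\Phi_{I_n} = T^n$. It then suffices to prove that the normalized iterates $T^n/\|T^n\|$ converge to a rank-one map of the form $Z\tr{X~\cdot~}$ with positive semi-definite $X, Z$.

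To this end I would invoke the noncommutative Perron--Frobenius theorem for primitive completely positive maps, as established for instance in \cite{evans1978spectral}. Primitivity of $T$ in the sense of the statement guarantees that the spectral radius $\lambda := r(T)>0$ is a simple eigenvalue of $T$, that the associated right eigenvector $\sigma$ and left eigenvector $\tau$ (i.e., the element of $\B(\H)$ satisfying $T^*(\tau) = \lambda \tau$ for the Hilbert--Schmidt adjoint) can both be chosen strictly positive, and that every other spectral value of $T$ has modulus strictly less than $\lambda$.

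Normalizing so that $\tr{\tau\sigma}=1$, the standard finite-dimensional spectral decomposition of $T$ yields
\[
\frac{T^n}{\lambda^n} \xrightarrow[n\to\infty]{} \sigma\,\tr{\tau~\cdot~},
\]
with a geometric rate controlled by the second largest eigenvalue modulus. The limit is a nonzero rank-one map; hence $\|T^n\|/\lambda^n$ converges to the strictly positive constant $c := \|\sigma\,\tr{\tau~\cdot~}\|$. Dividing these two convergences then gives
\[
\frac{T^n}{\|T^n\|} \xrightarrow[n\to\infty]{} \frac{\sigma}{c}\,\tr{\tau~\cdot~},
\]
which establishes \cont with $X = \tau$ and $Z = \sigma/c$, both positive semi-definite.

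The only substantive input is the Perron--Frobenius statement for primitive CP maps; everything else is a routine spectral computation in a finite-dimensional space, where all norms on the space of linear maps on $\B(\H)$ are equivalent, so the particular operator norm used in \cont is immaterial. The main conceptual step, if any, is verifying that the stated primitivity condition is strong enough to trigger the Perron--Frobenius dichotomy (simple dominant eigenvalue with strictly positive right and left eigenvectors, together with a spectral gap), but this is exactly the content of the noncommutative Perron--Frobenius theorem of Evans--H\o egh-Krohn.
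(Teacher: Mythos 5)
Your proposal is correct and follows essentially the same route as the paper: iterate the word $I$ and invoke the Evans--H\o egh-Krohn Perron--Frobenius theorem for primitive completely positive maps to obtain convergence of $T^n/\|T^n\|$ to a rank-one map $Z\tr{X~\cdot~}$. The only (immaterial) difference is presentational: the paper first conjugates $T$ into a trace-preserving map $\widehat{T}$ and applies the convergence theorem for primitive channels, whereas you read off the same limit directly from the spectral decomposition using the algebraic simplicity of the dominant eigenvalue and the spectral gap.
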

Note that $\Phi_I$ being primitive implies that \irr holds -- see \cite[Proposition~2.2]{evans1978spectral}.
\begin{proof}
By Perron--Frobenius theory for primitive completely positive maps -- see \cite[Theorems~2.3~and~2.4]{evans1978spectral} --, the spectral radius $r$ of the trace dual $T^*$ of $T$ is such that there exists $X>0$ such that $T^*(X)=rX$. Then, $\widehat{T}:\rho\mapsto r^{-1}X^{\frac12}T(X^{-\frac12} \rho X^{-\frac12})X^{\frac12}$ is trace-preserving. Since $T$ is primitive, so is $\widehat{T}$ and \cite[Theorem~4.2]{evans1978spectral} implies 
$$\widehat{T}^n\xrightarrow[n\to\infty]{}\sigma\tr{~\cdot~}$$
for some positive definite $\sigma\in\S(\H)$.
It follows that 
$$r^{-n}T^n \xrightarrow[n\to\infty]{}X^{-\frac12}\sigma X^{-\frac12}\tr{X~\cdot~}.$$
Hence,
$$
\frac{T^n}{\|T^n\|}\ \xrightarrow[n \to \infty]{}
 \frac{P}{\|P\|}, \quad\textrm{with},
\quad
P=X^{-\frac12}\sigma X^{-\frac12}\tr{X~\cdot~}.
$$
Setting 
\(
Z := X^{-\frac12}\sigma X^{-\frac12},
\)
we obtain
$$
\frac{T^k}{\|T^k\|}\ \xrightarrow[k \to \infty]{} Z\tr{X~\cdot~}.
$$
For $n\ge 1$, let $I_{n}$ denote the $n$–fold concatenation of $I$. 
Then $\Phi_{I_{n}} = T^{n}$ and
$$
\frac{\Phi_{I_{n}}}{\| \Phi_{I_{n}} \|}
\ \xrightarrow[n \to \infty]{} 
Z\tr{X ~\cdot~},
$$
which is precisely Assumption \cont.
\end{proof}

Recall that in the perfect measurement case each map $\Phi_i$ has Kraus rank $1$. Namely, there exist matrices $V_i\in \B(\H)$ such that
\[
\Phi_i(\rho)=V_i\,\rho\,V_i^\dagger,
\qquad \sum_i V_i^\dagger V_i=\id_\H .
\]

We will now give a sufficient condition to verify \cont when some of the maps $\Phi_i$ are in this perfect form.

For a collection of operators $\{V_i\in \B(\H)\}$, we say that it satisfies the non-darkness condition, denoted \textbf{(ND)}, if, for every subspace $E \subset \H$ with $\dim E \ge 2$, there exists a finite sequence $(i_1,\dots,i_n)$ such that
$$
\pi_E V_{i_n}^\dagger \cdots V_{i_1}^\dagger V_{i_1} \cdots V_{i_n} \pi_E
\neq \|V_{i_n} \cdots V_{i_1} \pi_E\|_\infty^2 \pi_E,
$$
where $\pi_E$ is the orthogonal projector onto $E$.\\

\begin{proposition}\label{prop:pur_implies_cont}
Let $\Phi = p \Phi_1 + (1-p) \Phi_2$ with $p \in (0,1],$ where $\Phi_1$ and $\Phi_2$ are quantum channels.
Consider a perfect unraveling $\{\Phi_{1,i}:\rho\mapsto V_i\rho V_i^*\}$ of $\Phi_1$ such that $\{V_i\}$ satisfies \nondarkness. Consider the quantum instrument defined by $\{p\Phi_{1,i}\}\cup\{(1-p)\Phi_{2,i}\}$ for some instrument $\{\Phi_{2,i}\}$ summing up to the quantum channel $\Phi_2$. Then \cont\ holds for this instrument.
\end{proposition}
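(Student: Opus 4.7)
The plan is to use only the perfect-unraveling part $\{p\Phi_{1,i}\}$ of the instrument, which reduces the problem to a statement about products of the $V_i$ alone. For any word $I=(i_1,\ldots,i_n)$ built from the labels of $\{p\Phi_{1,i}\}$, one has $\Phi_I(\rho)=p^n V_I\rho V_I^\ast$ with $V_I:=V_{i_n}\cdots V_{i_1}$, so $\|\Phi_I\|=p^n\|V_I\|_\infty^2$ and the normalized map is $\rho\mapsto V_I\rho V_I^\ast/\|V_I\|_\infty^2$. Consequently, if I can produce words $(I_n)$ and unit vectors $\psi,\phi\in\H$ with $V_{I_n}/\|V_{I_n}\|_\infty\to|\psi\rangle\langle\phi|$ in operator norm, then a short computation yields $\Phi_{I_n}/\|\Phi_{I_n}\|\to\bigl(\rho\mapsto\langle\phi|\rho|\phi\rangle\,|\psi\rangle\langle\psi|\bigr)$, which is exactly of the form $Z\,\tr{X\,\cdot}$ with $Z=|\psi\rangle\langle\psi|\geq 0$ and $X=|\phi\rangle\langle\phi|\geq 0$, giving \cont.

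The task thus reduces to: under \nondarkness, exhibit a sequence of Kraus words whose normalized products converge to a rank-one operator. My preferred route goes via the Maassen--K\"ummerer purification theorem. Indeed, \nondarkness is exactly the condition that no subspace of dimension $\geq 2$ is ``dark'' for the perfect measurement $\{V_i\}$, and is therefore equivalent to their purification hypothesis: starting from any full-rank state $\rho_0$, the perfect-measurement trajectory $\rho_n$ converges almost surely to a pure state. I would then pick an $\omega=(i_1,i_2,\ldots)$ in this purification event and set $I_n(\omega):=(\omega_1,\ldots,\omega_n)$. By compactness of the unit sphere of $\B(\H)$, the normalized products $V_{I_n(\omega)}/\|V_{I_n(\omega)}\|_\infty$ admit a subsequential limit $K$ with $\|K\|_\infty=1$. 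Passing to the limit in $\rho_n(\omega)=V_{I_n(\omega)}\rho_0 V_{I_n(\omega)}^\ast/\tr{V_{I_n(\omega)}^\ast V_{I_n(\omega)}\rho_0}$ forces $K\rho_0 K^\ast$ to be proportional to a rank-one state; since $\rho_0>0$, this forces $K$ itself to be rank one, i.e.\ $K=|\psi\rangle\langle\phi|$ for suitable unit vectors. Combining with the first paragraph produces the required sequence.

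A more self-contained alternative is to argue directly by descent on the minimum rank $r^\ast$ attained in the closure $\mathcal K$ of $\{V_I/\|V_I\|_\infty\}$ inside the unit sphere of $\B(\H)$. If $r^\ast\geq 2$ and $K\in\mathcal K$ attains this rank, applying \nondarkness to the appropriate subspace associated with $K$ (its range, or the top-singular-value eigenspace of $K^\ast K$) yields a word $V_J$ such that $V_JK/\|V_JK\|_\infty\in\mathcal K$ has either strictly smaller rank than $K$ or a strictly smaller top-singular-value eigenspace. Iterating this descent contradicts the minimality of $r^\ast$ unless $r^\ast=1$.

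The main obstacle in the first route is essentially bibliographic: verifying that \nondarkness as phrased here is exactly the purification hypothesis of Maassen--K\"ummerer. In the second route, the delicate point is making the descent at each step strictly decrease the correct quantity; this requires careful control of the spectrum of $(V_JK)^\ast(V_JK)$ in terms of that of $K^\ast K$ and of $V_J|_{\mathrm{range}(K)}$, and is in essence the combinatorial heart of the purification theorem. I would favour the first route for concision.
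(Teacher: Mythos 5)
Your preferred (first) route is correct and is essentially the paper's own proof: the paper likewise restricts to the perfect sub-instrument $\{p\Phi_{1,i}\}$, invokes the purification result for the Kraus family $\{V_i\}$ (in the form of \cite[Proposition~2.2]{benoist2019invariant}, which gives $V_{I_n}^*V_{I_n}/\|V_{I_n}\|_\infty^2\to|x\rangle\langle x|$ almost surely), and then extracts by compactness a subsequence along which $V_{I_n}/\|V_{I_n}\|_\infty\to|z\rangle\langle x|$, exactly as in your first two paragraphs. The only caveats are minor: the Maassen--K\"ummerer theorem gives almost-sure convergence to the \emph{set} of pure states rather than convergence of $\rho_n$ itself, which is all your subsequential argument actually needs, and \textbf{(ND)} is indeed verbatim the no-dark-subspace hypothesis of that theorem, so the ``bibliographic obstacle'' you flag is not a real one.
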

Remark that the instrument $\{p\Phi_{1,i}\}\cup\{(1-p)\Phi_{2,i}\}$ corresponds to the experiment in which, at each step, one chooses independently to use the instrument $\{\Phi_{i,1}\}$ with probability $p$ and the instrument $\{\Phi_{2,i}\}$ with probability $1-p$. When $\Phi_1=\Phi_2=\Phi$ and the second instrument is $\{\Phi\}$ that corresponds to missing a measurement outcome with probability $1-p$ at each time step. Hence, for $p=1$ one recovers the setting of \cite{benoist2019invariant}.
\begin{proof}
    Consider the quantum channel $\Phi_1$ with perfect measurement and Kraus operators $\{V_i\}$. Since \textbf{(ND)} is satisfied, this quantum channel satisfies the setting of \cite{benoist2019invariant}. In particular \cite[Proposition~2.2]{benoist2019invariant} yields that almost surely, there exist $x\in \H$ such that $\|x\|=1$ and
 \[
 \frac{V_{i_1}^*\dotsb V_{i_k}^*V_{i_k} \cdots V_{i_1}}{\|V_{i_k} \cdots V_{i_1}\|^2_\infty}\xrightarrow[k\to\infty]{}|x\rangle\langle x|.
 \]
 Since $\frac{V_{i_k} \cdots V_{i_1}}{\|V_{i_k} \cdots V_{i_1}\|_\infty}$ has supremum norm $1$, it takes value in a compact set. Hence, there exists a sequence $(I_n)_n\subset \cup_{k\in\N}\O^k$ and $z\in\H$, $\|z\|=1$ such that denoting $I_n=(i_{n,1},\dotsc,i_{n,k_n})$,
 $$\frac{V_{i_{n,k_n}} \cdots V_{i_{n,1}}}{\|V_{i_{n,k_n}} \cdots V_{i_{n,1}}\|_\infty}\xrightarrow[n\to\infty]{}|z\rangle\langle x|.$$
 Considering $\Phi_{I_n}=\Phi_{1,i_{n,k_n}}\circ\ldots\circ\Phi_{1,i_{n,1}}$,
 $$ \frac{\Phi_{I_{n}}}{\norm{ \Phi_{I_{n}} }} \xrightarrow[n \to \infty]{}Z\tr{X~.~},$$
 where $X= \vert x \rangle\langle x \vert$ and $Z=\vert z\rangle\langle z\vert.$
\end{proof}

\section{Asymptotic stability}\label{sec:asymptotic_stability}

In this section, we consider two quantum trajectories evolving according to the same sequence of measurement outcomes: 
the first one starts from the true initial state $\rho_0$, while the second one starts from a different initial state, say $\hat\rho_0\in \S(\H)$. 
The measurement outcomes are generated by the true trajectory, and the question is whether the two trajectories asymptotically coincide. 
More precisely, if $(\rho_n)$ and $(\hat\rho_n)$ denote the two trajectories, then conditionally on observing $(i_1,\ldots,i_n)$,

$$\hat\rho_n = \normalized{\Phi_{i_n}\circ \ldots \circ \Phi_{i_1}(\hat\rho_0)}.$$

Considering that the law of the outcomes $(i_n)_n\in \O^\N$ is given by $\P^{\rho_0}$, the process $(\hat\rho_n)$ is called the estimated trajectory. Note that, in contrast to $(\rho_n)$, the process $(\hat\rho_n)$ is not Markov. For the well-posedness of the estimated trajectory—namely, the non-vanishing of its denominator—it is standard to assume that the initial estimated state satisfies  
\begin{equation}\label{eq:state order}
\ker(\hat\rho_0)\subset \ker(\rho_0).
\end{equation}
Indeed, when this relation holds, there exists a constant $c>0$ such that $\rho_0 \leq c\,\hat\rho_0$. Hence, for all $(i_1,\ldots,i_n)\in\mathcal O^n$,
$$
\tr{\Phi_{i_n}\circ\cdots\circ\Phi_{i_1}(\rho_0)}
   \le c\;\tr{\Phi_{i_n}\circ\cdots\circ\Phi_{i_1}(\hat\rho_0)},
$$
which implies
$$
\tr{\Phi_{i_n}\circ\cdots\circ\Phi_{i_1}(\hat\rho_0)}=0
\;\Longrightarrow\;
\P_{\rho_0}(i_1,\ldots,i_n)=0.
$$
Then the denominator of $\hat\rho_n$ is $\P_{\rho_0}$-almost surely non-vanishing. When Equation~\ref{eq:state order} holds, we write $\rho_0\ll\hat\rho_0$ and we adopt this assumption throughout the paper.

The notion of asymptotic stability refers to the convergence of $(\hat\rho_n)$ toward $(\rho_n).$ To quantify the closeness between the states $\rho_n$ and $\hat\rho_n$, we use the \emph{quantum fidelity} \cite{jozsa1994fidelity}, defined for any density matrices $\rho$ and $\sigma$ by
\[
F(\rho,\sigma) = \tr*[][2]{\sqrt{\sqrt{\rho}\sigma \sqrt{\rho}}}.
\]
The fidelity takes values in $[0,1]$ and satisfies $F(\rho,\sigma)=1$ if and only if $\rho=\sigma$, and $F(\rho,\sigma)=0$ if and only if $\range(\rho) \perp \range(\sigma).$\\

Among the fundamental properties of the fidelity, we have the following result from \cite{rouchon2011fidelity}. Note that it holds in the general case, even if $\irr$ and $\prox$ do not hold.

\begin{theorem}[Theorem~2 in \cite{rouchon2011fidelity}]\label{thm:fidelite_sous_mart}
    The fidelity between $(\rho_n)$ and $(\hat\rho_n)$ is a submartingale, i.e. for all $n \geq 0$,
    $$\EE{F(\rho_{n+1}, \hat\rho_{n+1})}{\F_n} \geq F(\rho_n, \hat\rho_n)\qquad \P^{\rho_0}-\mathrm{a.s.}$$
    In particular, the bounded sequence $(F(\rho_n,\hat\rho_n))$ converges almost surely.
\end{theorem}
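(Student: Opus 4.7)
The plan is to verify the submartingale inequality by expanding the one-step conditional expectation and then exploiting monotonicity properties of the quantum fidelity. Conditional on $\F_n$, under $\P^{\rho_0}$ the outcome $i_{n+1}$ has law $p_i := \tr{\Phi_i(\rho_n)}$, so that, setting $\hat p_i := \tr{\Phi_i(\hat\rho_n)}$ and $F_i := F(\Phi_i(\rho_n)/p_i,\Phi_i(\hat\rho_n)/\hat p_i)$, the claim reduces to the pointwise inequality
\[\EE{F(\rho_{n+1},\hat\rho_{n+1})}{\F_n}=\sum_{i=1}^m p_i\,F_i \geq F(\rho_n,\hat\rho_n).\]

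To produce this lower bound I would lift the states to purifications. Pick Uhlmann purifications $|\psi\rangle,|\hat\psi\rangle\in\H\otimes\H'$ of $\rho_n,\hat\rho_n$ realizing $\langle\psi|\hat\psi\rangle=F(\rho_n,\hat\rho_n)$, and introduce a Stinespring isometry $V:\H\to\H\otimes\mathcal{K}$ together with orthogonal outcome projectors $P_i$ on $\mathcal{K}$ such that each $\Phi_i$ is recovered by conjugating with $V$, projecting onto $P_i$, and tracing out the $\mathcal{K}$ factor. The vectors $(P_iV\otimes I_{\H'})|\psi\rangle/\sqrt{p_i}$ and $(P_iV\otimes I_{\H'})|\hat\psi\rangle/\sqrt{\hat p_i}$ are then purifications of the conditional states $\Phi_i(\rho_n)/p_i$ and $\Phi_i(\hat\rho_n)/\hat p_i$; applying Uhlmann's inequality termwise and summing, using the completeness relation $\sum_i V^*P_iV=I_\H$, delivers the \emph{symmetric} bound
\[\sum_i \sqrt{p_i\hat p_i}\,F_i\ \geq\ F(\rho_n,\hat\rho_n).\]

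The main obstacle is now the passage from this symmetric bound to the asymmetric one with $p_i$-weights; a direct Cauchy--Schwarz only produces a lower bound by $F(\rho_n,\hat\rho_n)^2$, which is strictly weaker whenever $F<1$. To close the gap I would follow Rouchon's strategy, which exploits the Uhlmann freedom outcome-by-outcome: for each $i$ one independently optimizes an environment unitary $U_i$ in Uhlmann's formula for $F_i$, and a coherent joint choice of these unitaries, together with careful bookkeeping on the Kraus representatives of the instrument, recovers the desired $p_i$-weighting. This outcome-dependent optimization is the technical heart of the argument and the step where the specific structure of the quantum Bayesian update is essential. Once the submartingale property is secured, the almost sure convergence of $(F(\rho_n,\hat\rho_n))_n$ follows immediately from Doob's martingale convergence theorem, since the process takes values in $[0,1]$.
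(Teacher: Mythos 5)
The paper itself offers no proof of this statement---it is imported verbatim from Rouchon's work---so the only thing to assess is your argument on its own terms, and it has a genuine gap exactly at the step you yourself call ``the technical heart of the argument.'' You reduce correctly to the one-step inequality $\sum_i p_i F_i \ge F(\rho_n,\hat\rho_n)$, and your Stinespring/Uhlmann derivation of a symmetric bound is sound, but you never carry out the passage to the $p_i$-weighted bound: you defer to an unspecified ``coherent joint choice'' of Uhlmann unitaries and ``careful bookkeeping.'' A proof cannot end with a pointer to an optimization you have not performed.

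The deeper problem is that the obstacle you describe is illusory, and the misdiagnosis comes from working with the wrong normalization of the fidelity. In this paper $F$ is the \emph{squared} fidelity, $F(\rho,\sigma)=\bigl(\tr{\sqrt{\sqrt\rho\,\sigma\sqrt\rho}}\bigr)^2$, so Uhlmann's theorem and your purification construction control the root fidelity $G:=\sqrt F$. Writing $G_i:=\sqrt{F_i}$, the termwise bound you derive (summed using $\sum_i V^*P_iV=I_\H$) reads
$$\sum_{i=1}^m\sqrt{p_i\hat p_i}\,G_i\;\ge\;G(\rho_n,\hat\rho_n).$$
A single Cauchy--Schwarz, together with $\sum_i\hat p_i=\tr{\Phi(\hat\rho_n)}=1$, then gives
$$G(\rho_n,\hat\rho_n)\;\le\;\sum_i\sqrt{\hat p_i}\,\bigl(\sqrt{p_i}\,G_i\bigr)\;\le\;\Bigl(\sum_i\hat p_i\Bigr)^{1/2}\Bigl(\sum_i p_iG_i^2\Bigr)^{1/2}=\Bigl(\sum_i p_iF_i\Bigr)^{1/2},$$
hence $\sum_i p_iF_i\ge G(\rho_n,\hat\rho_n)^2=F(\rho_n,\hat\rho_n)$, which is precisely the submartingale inequality. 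Your claim that Cauchy--Schwarz ``only produces a lower bound by $F^2$'' is what happens if you insist on the unsquared fidelity; with the paper's definition no outcome-dependent unitary optimization is needed, and this is essentially how Rouchon's proof concludes. The remaining ingredients of your write-up (the conditional law of the outcome given $\F_n$, the projector model of the instrument, and Doob's convergence theorem for the bounded submartingale) are correct.
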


While the submartingale property guarantees almost sure convergence of the fidelity, 
it provides no information on the value of the limit. The following theorem shows that, in the irreducible case, the contractivity assumption is a necessary and sufficient condition for the fidelity to converge to one.

\begin{theorem}[Asymptotic stability]\label{thm:asymp_stab_imperfect} 
Assume that $\irr$ holds. Then the following equivalence holds:
$$
\cont \quad \Longleftrightarrow \quad 
\forall\, \rho_0,\hat\rho_0 \text{ with } \rho_0 \ll \hat\rho_0,\ 
\limn F(\rho_n,\hat\rho_n)=1 \quad \P^{\rho_0}-\mathrm{a.s.}
$$
\end{theorem}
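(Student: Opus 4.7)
My plan is to prove the two implications separately, with the forward direction requiring more care.

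\emph{Sufficiency} ($\cont+\irr\Rightarrow$ asymptotic stability). Theorem~\ref{thm:fidelite_sous_mart} already yields $F(\rho_n,\hat\rho_n)\to F_\infty\in[0,1]$ $\P^{\rho_0}$-a.s., so the task reduces to proving $F_\infty=1$. The approach is to establish that for each $\eta>0$ there exist $N=N(\eta)\in\N$ and $p=p(\eta)>0$ with
$$\P^{\rho_0}\bigl(F(\rho_{n+N},\hat\rho_{n+N})\geq 1-\eta/2\,\bigm|\,\F_n\bigr)\geq p\quad\text{on }\{F(\rho_n,\hat\rho_n)\leq 1-\eta\}.$$
Given this bound, the submartingale property forces $\mathbb{E}[F_{n+N}-F_n\mid\F_n]\geq p\eta/2$ on $\{F_n\leq 1-\eta\}$, and boundedness $F\leq 1$ implies that $\{F_n\leq 1-\eta\}$ occurs only finitely often a.s.; letting $\eta\to 0$ yields $F_\infty=1$.

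The conditional lower bound itself is built in two stages. First, \irr\ supplies a bounded-length word $J$ with strictly positive $\F_n$-conditional probability after which $(\rho_{n+|J|},\hat\rho_{n+|J|})$ lies in the compact set of pairs with both components full rank and with uniform minimal-eigenvalue bound; this uses preservation of $\rho\ll\hat\rho$ along the dynamics together with the Cesaro convergence $\frac{1}{n}\sum_{k=0}^{n-1}\Phi^k(\rho)\to\rho_{\inv}$ to the full-rank invariant state. Second, on such a preconditioned full-rank pair, \cont\ supplies a long word $I$ whose \emph{uniform} operator-norm approximation by $Z\tr{X\,\cdot\,}$ drives both $\rho_{n+|J|+|I|}$ and $\hat\rho_{n+|J|+|I|}$ into an $O(\epsilon)$ neighbourhood of $Z/\tr{Z}$, which yields fidelity $\geq 1-\eta/2$ for $\epsilon$ chosen small enough relative to $\eta$ and the minimal-eigenvalue bound from Step~1.

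\emph{Necessity} (asymptotic stability $+\irr\Rightarrow\cont$). Take $\rho_0=\rho_{\inv}$, which is full rank by \irr, so $\rho_0\ll\hat\rho_0$ for every full-rank $\hat\rho_0$. Pick a countable dense set $D$ of full-rank states and, by countable additivity, find a $\P^{\rho_0}$-null set $N$ outside of which $F(\rho_n,\hat\rho_n^{\hat\rho_0})\to 1$ for every $\hat\rho_0\in D$. For any $\omega\notin N$, compactness of the unit sphere of CP maps on $\B(\H)$ yields a subsequence $(n_k)$ along which $\|\Phi_{I_{n_k}(\omega)}\|^{-1}\Phi_{I_{n_k}(\omega)}$ converges to a nonzero CP map $\Psi$. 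For any full-rank $\hat\rho_0$, $\tr{\Psi(\hat\rho_0)}>0$ (since $\hat\rho_0\geq c\,\mathrm{Id}_\H$ and $\Psi\neq 0$), and the normalised trajectories converge: $\hat\rho_{n_k}^{\hat\rho_0}(\omega)\to\Psi(\hat\rho_0)/\tr{\Psi(\hat\rho_0)}$ and $\rho_{n_k}(\omega)\to\Psi(\rho_{\inv})/\tr{\Psi(\rho_{\inv})}$. Asymptotic stability forces these limits to coincide for each $\hat\rho_0\in D$, and by density for every full-rank $\hat\rho_0$. Since the full-rank states span $\B(\H)$, linearity and positivity of $\Psi$ give $\Psi=Z\tr{X\,\cdot\,}$ with $X,Z\geq 0$ and both nonzero, which is exactly \cont.

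\emph{Expected main obstacle.} The serious work is the sufficiency direction, specifically producing the uniform lower bound $p(\eta)>0$ on the set $\{(\rho,\hat\rho):\rho\ll\hat\rho,\,F(\rho,\hat\rho)\leq 1-\eta\}$. Because the proportionality constant in $\rho_n\leq c\hat\rho_n$ can be arbitrarily large (even though the relation $\rho\ll\hat\rho$ is preserved), the \cont\ word alone may fail to drive $\hat\rho$ close to $Z/\tr{Z}$: a small value of $\tr{X\hat\rho_n}$ destabilises the normalisation of $\Phi_I(\hat\rho_n)$. Using \irr\ as a preconditioning step to place $(\rho,\hat\rho)$ into a full-rank, compact region \emph{before} applying the \cont\ word, while keeping the conditional probability uniformly positive, is the technically delicate point.
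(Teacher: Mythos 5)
Your necessity direction is essentially the paper's argument (the paper uses a finite spanning family of full-rank states rather than a countable dense set, but the accumulation-point-is-rank-one mechanism is identical) and is correct. The problem is in the sufficiency direction, where your key lemma fails.

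The preconditioning step is the gap. You claim that \irr\ supplies a bounded-length word $J$, of uniformly positive conditional probability, after which $(\rho_{n+|J|},\hat\rho_{n+|J|})$ lies in a compact set of \emph{full-rank} pairs with a uniform minimal-eigenvalue bound. Irreducibility gives nothing of the sort. In the perfect-measurement special case (which is explicitly within the scope of the theorem, cf.\ Proposition~\ref{prop:pur_implies_cont} with $p=1$), each $\Phi_i(\rho)=V_i\rho V_i^*$ cannot increase rank, so if $\rho_0$ is pure then $\rho_n$ is pure for every $n$ and every word: no word $J$ ever produces a full-rank conditional state, let alone one with a uniform eigenvalue floor. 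The Cesàro convergence $\frac1n\sum_k\Phi^k(\rho)\to\rho_\inv$ that you invoke concerns the outcome-averaged state, not the conditional state along any particular word, so it cannot rescue the claim. Without the eigenvalue floor, your own diagnosis applies: $\tr{X\hat\rho_n}$ (or $\tr{X\rho_n}$) can be arbitrarily small, the normalisation of $\Phi_{I}(\cdot)$ is not controlled uniformly, and the uniform lower bound $p(\eta)>0$ over the non-compact set $\{\rho\ll\hat\rho,\ F\le 1-\eta\}$ is not obtained. (A secondary, fixable issue: even granting the conditional probability bound, the inference ``$\mathbb{E}[F_{n+N}-F_n\mid\F_n]\ge p\eta/2$'' does not follow from the submartingale property alone, since $F_{n+N}$ may drop below $F_n$ off the good event; the correct tool is a conditional Borel--Cantelli argument combined with a.s.\ convergence of $F_n$.)

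The paper's proof avoids uniformity altogether, and this is the idea you are missing. From the a.s.\ convergence of the bounded submartingale one extracts a subsequence $(n_k)$ along which the \emph{conditional} fidelity increments $\EE{|F_{n_k+p}-F_{n_k}|}{\F_{n_k}}$ vanish a.s., which gives, for every fixed word $I$, that $|F(\normalized{\Phi_I(\rho_{n_k})},\normalized{\Phi_I(\hat\rho_{n_k})})-F(\rho_{n_k},\hat\rho_{n_k})|\tr{\Phi_I(\rho_{n_k})}\to 0$. One then passes to an $\omega$-dependent accumulation point $\Lambda_\acc$ of the normalised maps (nonzero on $\rho_0$ by a martingale/Radon--Nikodym lemma, Lemma~\ref{lem:acc point pos}), so the identity above becomes an exact equality at the limiting pair $(\rho_\acc,\hat\rho_\acc)$ for \emph{all} words simultaneously. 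Irreducibility is used only to produce, for this particular $\rho_\acc$, some connecting word $J$ with $\tr{X\,\Phi_J(\rho_\acc)}>0$ (otherwise $\tr{X\Phi^k(\rho_\acc)}=0$ for all $k$, contradicting \irr); applying the \cont\ words $I_n$ after $J$ and letting $n\to\infty$ forces $F(\rho_\acc,\hat\rho_\acc)=1$, and the submartingale convergence upgrades this accumulation value to the limit. No compactness of state pairs, no eigenvalue floor, and no uniform positive probability are needed.
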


Before proving this theorem, we require the following lemma.
\begin{lemma}\label{lem:acc point pos}
    For any $\rho\in \S(\H)$, $\P^\rho$-almost surely, any accumulation point $\Lambda$ of $\left(\frac{\Phi_{\omega_n}\circ\dotsb\circ\Phi_{\omega_1}}{\|\Phi_{\omega_n}\circ\dotsb\circ\Phi_{\omega_1}\|}\right)_{n\in\N}$ is such that $\Lambda(\rho)\neq0$.
\end{lemma}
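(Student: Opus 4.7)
The plan is to argue by contradiction, using the classical idea that likelihood ratios of two initial states form a martingale under $\P^{\rho}$. Write $\Phi^{(n)}_\omega := \Phi_{\omega_n}\circ\cdots\circ\Phi_{\omega_1}$ and, for each $\sigma\in\S(\H)$, introduce the likelihood ratio
$$
N_n^{\sigma} := \frac{\tr{\Phi^{(n)}_\omega(\sigma)}}{\tr{\Phi^{(n)}_\omega(\rho)}}.
$$
The denominator coincides with the $\P^{\rho}$-probability of the cylinder $C_{\omega_1,\ldots,\omega_n}$, so it is $\P^{\rho}$-a.s.\ positive and $N_n^{\sigma}$ is well-defined. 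A short computation on cylinders (using $\Phi=\sum_i\Phi_i$) shows that $(N_n^{\sigma})_{n\in\N}$ is a non-negative $\P^{\rho}$-martingale; by Doob's theorem it converges $\P^{\rho}$-a.s.\ to a finite limit $N_\infty^{\sigma}$.

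Since $\S(\H)$ is compact, I fix a countable dense subset $\{\sigma_k\}_{k\in\N}\subset\S(\H)$ and consider the full-measure event
$A := \bigcap_{k\in\N}\{N_n^{\sigma_k}\text{ converges to a finite limit}\}$. Take $\omega\in A$ and assume, for contradiction, that some subsequence $(\Psi_{n_j})$ of $\Psi_n := \Phi^{(n)}_\omega/\|\Phi^{(n)}_\omega\|$ converges to a map $\Lambda$ with $\Lambda(\rho)=0$. Each $\Psi_n$ is completely positive with operator norm $1$; since the set of CP maps is closed and norm is continuous in finite dimension, $\Lambda$ is CP with $\|\Lambda\|=1$, and in particular $\Lambda\neq 0$. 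By continuity of $\sigma\mapsto\Lambda(\sigma)$ and density of $\{\sigma_k\}$, there exists $k$ with $\Lambda(\sigma_k)\neq 0$, and positivity gives $\tr{\Lambda(\sigma_k)}>0$. Using $\|\Psi_n(\tau)\|=\tr{\Psi_n(\tau)}$ for $\tau\geq 0$, we rewrite $N_{n_j}^{\sigma_k}=\tr{\Psi_{n_j}(\sigma_k)}/\tr{\Psi_{n_j}(\rho)}$, which tends to $+\infty$ because the numerator tends to $\tr{\Lambda(\sigma_k)}>0$ while the denominator tends to $0$. This contradicts the finiteness of $N_\infty^{\sigma_k}$ on $A$, so no such $\Lambda$ can exist.

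The main subtlety is a quantifier exchange: the accumulation point $\Lambda$ depends measurably on $\omega$, so one cannot simply choose a witness $\sigma$ after seeing $\omega$ and invoke martingale convergence for that specific $\sigma$. The role of the \emph{countable} dense family $\{\sigma_k\}$ is precisely to ensure that martingale convergence holds simultaneously for all witnesses outside a single $\P^{\rho}$-null set, so that continuity of $\Lambda$ produces a valid $\sigma_k$ for each $\omega\in A$. Note that this argument uses neither \textbf{(Irr)} nor \prox, consistent with the generality of the lemma.
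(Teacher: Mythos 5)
Your argument is correct in substance but takes a genuinely different route from the paper's. The paper works under the reference measure $\P^{\sigma}$ with $\sigma=\id_\H/d$ and considers the single \emph{operator-valued} martingale $M_n=\Phi_{\omega_1}^*\circ\dotsb\circ\Phi_{\omega_n}^*(\id_\H)/\P^{\sigma}(\omega_1,\dotsc,\omega_n)$, which is uniformly bounded by $d$, hence converges $\P^{\sigma}$-a.s.\ and in $L^1$ to a limit $M_\infty$ satisfying $\mathrm{d}\P^{\rho}=\tr{\rho M_\infty}\,\mathrm{d}\P^{\sigma}$. It then identifies $\tr{\Lambda(\rho)}$ directly as $\tr{\rho M_\infty}/\|M_\infty\|_\infty$ and concludes positivity because $\tr{\rho M_\infty}$ is the Radon--Nikodym density of $\P^{\rho}$ with respect to $\P^{\sigma}$, hence positive $\P^{\rho}$-a.s. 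That single operator-valued object packages all of your scalar likelihood ratios at once and yields a direct (non-contradiction) argument with no need for a countable dense family of witnesses; the choice of the full-rank reference state $\id_\H/d$ is what guarantees absolute continuity and hence a genuine martingale. Your version is more elementary and self-contained: scalar ratios under $\P^{\rho}$ itself, a contradiction with finiteness of the limit, and a correctly identified and correctly handled quantifier exchange via the countable dense set $\{\sigma_k\}$. Both arguments are valid and neither uses \textbf{(Irr)} or \prox.

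One small correction: $(N_n^{\sigma})$ is in general only a non-negative $\P^{\rho}$-\emph{super}martingale, not a martingale. The cylinder computation gives
$$
\EE{N_{n+1}^{\sigma}}{\F_n}
=\sum_{i:\ \tr{\Phi_i\circ\Phi_{\omega_n}\circ\dotsb\circ\Phi_{\omega_1}(\rho)}>0}
\frac{\tr{\Phi_i\circ\Phi_{\omega_n}\circ\dotsb\circ\Phi_{\omega_1}(\sigma)}}{\tr{\Phi_{\omega_n}\circ\dotsb\circ\Phi_{\omega_1}(\rho)}}
\ \le\ N_n^{\sigma},
$$
with possible strict inequality when some cylinder has positive $\P^{\sigma}$-probability but zero $\P^{\rho}$-probability, i.e.\ when $\P^{\sigma}|_{\F_{n+1}}$ is not absolutely continuous with respect to $\P^{\rho}|_{\F_{n+1}}$. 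This does not damage your proof: a non-negative supermartingale still converges $\P^{\rho}$-a.s.\ to a finite limit, which is the only property you use.
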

\begin{proof}
    Let $\sigma=\id_\H/d$ and 
    $$M_n=\frac{\Phi_{\omega_1}^*\circ\dotsb\circ\Phi_{\omega_n}^*(\id_\H)}{\P^\sigma(\omega_1,\dotsc,\omega_n)}.$$
    Then, by construction $\|M_n\|_\infty\leq d$ and $\Phi^*(\id_\H)=\id_\H$ implies
    $$\EE[][\sigma]{M_{n+1}}{\F_n}=M_n.$$
    Thus, $(M_n)_n$ is a uniformly bounded $\P^\sigma$ martingale. It follows it converges almost surely and in $L^1(\P^\sigma)$-norm to a random variable $M_\infty$.

    Moreover, by definition, for any $\rho\in \S(\H)$, $\d\P^\rho|_{\F_n}=\tr{\rho M_n}\d\P^\sigma|_{\F_n}$. Since $(M_n)_n$ converges in $L^1(\P^\sigma)$-norm to $M_\infty$,
    \begin{equation}
        \label{eq:radon-nikodym}
        \d\P^\rho=\tr{\rho M_\infty}\d\P^\sigma.
    \end{equation}
    In particular, $\P^\rho$ is absolutely continuous with respect to $\P^\sigma$.

    Let $\Lambda$ be an accumulation point of $\left(\frac{\Phi_{\omega_n}\circ\dotsb\circ\Phi_{\omega_1}}{\|\Phi_{\omega_n}\circ\dotsb\circ\Phi_{\omega_1}\|}\right)_{n\in\N}$. Then, since $\tr{\sigma M_\infty}=1$ by definition, $\|M_\infty\|_\infty>0$ $\P^\rho$-almost surely and $\Lambda^*(\id_\H)=\frac{M_\infty}{\|M_\infty\|_\infty}$. It follows that $\tr{\Lambda(\rho)}=\tr{\rho M_\infty}/\|M_\infty\|_\infty$. Then Eq.~\eqref{eq:radon-nikodym} implies $\tr{\rho M_\infty}>0$ $\P^\rho$-almost surely, so $\Lambda(\rho)\neq 0$ $\P^\rho$-almost surely.
\end{proof}

\begin{proof}[Proof of Theorem~\ref{thm:asymp_stab_imperfect}]
We first prove the forward implication. Assume that \cont holds. We leverage the almost sure convergence of the fidelity established in Theorem~\ref{thm:fidelite_sous_mart}. Since $F(\rho_n,\hat \rho_n)$ is bounded by one, Lebesgue's dominated convergence theorem implies that
$$\E*[][\rho_0]{\sum_{p=1}^\infty \tfrac1{p^2}|F(\rho_{n+p},\hat \rho_{n+p})-F(\rho_n,\hat \rho_{n})|}\xrightarrow[n\to\infty]{}0.$$
Hence,
$$\E*[][\rho_0]{\sum_{p=1}^\infty \tfrac1{p^2}\EE{|F(\rho_{n+p},\hat \rho_{n+p})-F(\rho_n,\hat \rho_{n})|}{\F_n}}\xrightarrow[n\to\infty]{}0.$$
That implies the convergence in $L^1$-norm of
$$\sum_{p=1}^\infty \tfrac1{p^2}\EE{|F(\rho_{n+p},\hat \rho_{n+p})-F(\rho_n,\hat \rho_{n})|}{\F_n}$$
to $0$.
Thus, there exists a subsequence $(n_k)_k$ such that 
$$\sum_{p=1}^\infty \tfrac1{p^2}\EE{|F(\rho_{n_k+p},\hat \rho_{n_k+p})-F(\rho_{n_k},\hat \rho_{n_k})|}{\F_{n_k}}\xrightarrow[k\to\infty]{\P^{\rho_0}\text{-a.s.}}0.$$
Then, for any $p\in \N$,
$$\EE{|F(\rho_{n_k+p},\hat \rho_{n_k+p})-F(\rho_{n_k},\hat \rho_{n_k})|}{\F_{n_k}}\xrightarrow[k\to\infty]{}0$$
$\P^{\rho_0}$-almost surely, which can be rewritten as
\begin{equation*}
    \lim_k \sum_{I \in \O^p} \abs*{F \left( \normalized{\Phi_{I} (\rho_{n_k})}  , \normalized{\Phi_{I} (\hat\rho_{n_k})} \right)  - F \left( \rho_{n_k} , \hat\rho_{n_k} \right) }
     \tr{\Phi_{I} ( \rho_{n_k} )} = 0.
\end{equation*}
Since all the terms in the sum are non-negative, it follows that, $\P^{\rho_0}$-almost surely, for any fixed word $I \in \O^p$, 
\begin{equation}\label{eq:convergence words}
    \lim_k \abs*{F \left( \normalized{\Phi_{I} (\rho_{n_k})}  , \normalized{\Phi_{I} (\hat\rho_{n_k})} \right)  - F \left( \rho_{n_k} , \hat\rho_{n_k} \right) }
     \tr{\Phi_{I} ( \rho_{n_k} )} = 0.
\end{equation}
Let 
$$\Lambda_n=\frac{\Phi_{\omega_n}\circ\dotsb\circ\Phi_{\omega_1}}{\|\Phi_{\omega_n}\circ\dotsb\circ\Phi_{\omega_1}\|}.$$
Since $\norm{\Phi_{\omega_{n}} \circ \ldots \circ \Phi_{\omega_1}} = 0$ implies 
$\P^{\rho_0}(\omega_1,\dotsc,\omega_n)=\tr{\Phi_{\omega_{n}} \circ \ldots \circ \Phi_{\omega_1}(\rho_0)} = 0,$ the map $\Lambda_n$ is $\P^{\rho_0}$-almost surely well defined.
For all $n \geq 0$, $\rho_n$ and $\hat\rho_n$ satisfy 
$$ \rho_n = \normalized{\Lambda_n(\rho_0)} \qquad \text{and} \qquad \hat\rho_n = \normalized{\Lambda_n(\hat\rho_0)},$$
where these expressions are well-defined since $\Lambda_n(\rho_0) \neq 0$ and $\Lambda_n(\hat\rho_0) \neq 0$ $\P^{\rho_0}$--a.s.
The sequence $(\Lambda_n)$ takes value in the compact set of completely positive maps from $\B(\H)$ to itself with $\|\cdot\|$-norm equal to $1$. Therefore, there exists an $\omega$-dependent subsequence of $(n_k)_{k\in \N}$ denoted again $(n_k)_{k\in\N}$ along which $(\Lambda_{n_k})$ converges to an accumulation point $\Lambda_{\acc}$. Lemma~\ref{lem:acc point pos} ensures that $\Lambda_\acc(\rho_0)\neq 0$ $\P^{\rho_0}$-almost surely. Since $\rho_0\ll\hat\rho_0$, $\Lambda_\acc(\hat\rho_0)\neq 0$ $\P^{\rho_0}$-almost surely also. 

Let us fix $\omega\in \Omega$ such that the limit of Eq.~\eqref{eq:convergence words} holds and $\Lambda_\acc(\rho_0)\neq 0$. By definition, the set of such $\omega$ is of full $\P^{\rho_0}$-measure.
Let $\rho_\acc=\normalized{\Lambda_\acc(\rho_0)}$ and $\hat\rho_\acc=\normalized{\Lambda_\acc(\hat \rho_0)}$. Since $\rho_0\ll\hat \rho_0$ and $\tr{\Lambda_\acc(\rho_0)}>0$, $\rho_\acc\ll\hat \rho_\acc$. By continuity of the fidelity it follows that, for any $I\in \cup_{p\in \N}\O^p$,
\begin{equation*}
    \left| F \left( \normalized{\Phi_{I}(\rho_\acc)} , \normalized{\Phi_{I}(\hat\rho_\acc)} \right) - F \left( \rho_\acc , \hat\rho_\acc  \right) \right|  \tr{\Phi_{I} (\rho_\acc)} = 0,
\end{equation*}
where $\normalized{\Phi_{I}(\rho_\acc)}$ and $\normalized{\Phi_{I}(\hat\rho_\acc)}$ are defined arbitrarily when $\tr{\Phi_{I} (\rho_\acc)}=0$.

Consider the positive semi-definite operator \(X\) defined in Assumption \(\prox\). Under Assumption \(\irr\), there must exist $J\in \cup_{k\in\N}\O^k$ such that
\begin{equation}\label{eq:mot_qui_connecte}
    \tr{X\,\Phi_J(\rho_\acc)}>0.
\end{equation}
Otherwise, summing over $\O^k$ would yield
\(\tr{X\,\Phi^k(\rho_\acc)}=0\) for all \(k\), and \cite[Proposition~2.2]{evans1978spectral} prevents this eventuality when \irr holds. Since $\rho_\acc\ll\hat\rho_\acc$,  we also have $\tr{X\,\Phi_J(\hat\rho_\acc)}>0.$ 

Let \((I_{n})_{n\in\N}\) be the sequence from Assumption \(\prox\). Then, for all \(n\in\mathbb{N}\),

\begin{equation*}
    \left| F \left( \normalized{\Phi_{I_{n}}  \circ \Phi_{J} (\rho_\acc)} , \normalized{\Phi_{I_{n}} \circ \Phi_{J} (\hat\rho_\acc)} \right) - F \left( \rho_\acc , \hat\rho_\acc \right) \right| \tr{\Phi_{I_{n}} \circ \Phi_{J} (\rho_\acc ) } = 0.
\end{equation*}
Letting \(n\) tend to infinity, we deduce
\begin{equation*}
    \left| F \left( \normalized{\tr{X \Phi_J(\rho_\acc)} Z } ,\normalized{\tr{X \Phi_J(\hat\rho_\acc)} Z } \right) - F \left( \rho_\acc , \hat\rho_\acc \right) \right|\tr{X \Phi_J (\rho_\acc)}  \tr{Z} = 0.
\end{equation*}
Since \(\tr{X \Phi_J (\rho_\acc)} \tr{Z} > 0\), we conclude that
\begin{align*}
    F \left( \rho_\acc, \hat\rho_\acc \right) &= F \left( \normalized{\tr{X \Phi_J (\rho_\acc)} Z}, \normalized{\tr{X \Phi_J (\hat\rho_\acc)} Z} \right) \\
    &= F \left( \normalized{Z}, \normalized{Z} \right) \\
    &= 1.
\end{align*}
As $\omega$ was chosen arbitrarily in a set of full $\P^{\rho_0}$-measure, it follows that $1$ is $\P^{\rho_0}$-almost surely an accumulation point of $(F(\rho_n,\hat \rho_n))$. Since Theorem~\ref{thm:fidelite_sous_mart} implies that $(F(\rho_n,\hat \rho_n))$ converges $\P^{\rho_0}$-almost surely, it must converge to $1$ $\P^{\rho_0}$-almost surely, and the forward implication is proved.

\medskip

Let us now proceed with the proof of the reverse implication.
Let $\{\hat\rho_0^{(j)}\}_{j=1}^{d^2}\subset\S(\H)$ be a family of
full-rank states spanning $\B(\H)$, and let $\rho_0=\rho_0^{(1)}$. 
Since every $\hat\rho_0^{(j)}$ is full rank, we have $\rho_0 \ll  \hat\rho_0^{(j)}$ for all $j$. 
Therefore, the assumption on the right–hand side of the equivalence applies to every pair $(\rho_0,\hat\rho_0^{(j)})$.
Let us consider again
$$\Lambda_n=\frac{\Phi_{\omega_n}\circ\dotsb\circ\Phi_{\omega_1}}{\|\Phi_{\omega_n}\circ\dotsb\circ\Phi_{\omega_1}\|}$$
and take an accumulation point $\Lambda_\acc.$ 
Since each $\rho_0^{(j)}$ is full-rank, we have $\Lambda_\acc(\rho_0^{(j)}) \neq 0$ for all $j$. 
By convergence of the fidelity and its continuity, $\P^{\rho_0}$–a.s. we have
\[
F\oo\normalized{\Lambda_\acc(\rho_0)},\,\normalized{\Lambda_\acc(\hat\rho_0^{(j)})}\cc = 1,
\qquad j = 1,\dots,d^2.
\]
Thus for each $j$ there exists $\alpha_j>0$ such that
\[
\Lambda_\acc(\hat\rho_0^{(j)})=\alpha_j\,\Lambda_\acc(\rho_0).
\]
Since $\{\hat\rho_0^{(j)}\}$ spans $\B(\H)$ and $\Lambda_\acc$ is linear,
this implies $\range(\Lambda_\acc)=\mathbb C \Lambda_\acc(\rho_0)$ and therefore $\rank\Lambda_\acc=1$. 
Thus, for $\P^{\rho_0}$–almost every sequence $(i_n)\in \Omega$,
the normalized composition
\[
\frac{\Phi_{i_n}\circ\cdots\circ\Phi_{i_1}}{\|\Phi_{i_n}\circ\cdots\circ\Phi_{i_1}\|}
\]
admits a rank–one accumulation point. Hence by setting $I_n=(i_1,\dotsc,i_n)$, \cont holds, and the reverse implication is proved.
\end{proof}

\section{Uniqueness of the invariant measure}\label{sec:uniqueness}
This section is devoted to studying the invariant measure of the quantum trajectory, from which an ergodic theorem will follow. To this end, we make explicit the Markov kernel describing the quantum trajectories. Let us equip the set $\S(\H)$ with its Borel $\sigma$-algebra $\fB$. The Markov kernel $\Pi$ of the quantum trajectory is defined as
$$\Pi(\rho, B)=\sum_{i \in \O} \mathds 1_B \left(\normalized{\Phi_i(\rho)} \right)\tr{\Phi_i(\rho)},$$
for all $\rho \in \S(\H)$ and all $B \in \fB$. Here, $\Pi(\rho, B)$ represents the probability that, starting from the state $\rho$, the next state of the trajectory lies in the subset $B$ of $\S(\H).$ The assumption that $\Phi=\sum_{i\in\O}\Phi_i$ is trace-preserving ensures the conservation of probability, $\Pi(\rho, \S(\H)) = 1$ for any $\rho\in \S(\H)$.

Given a probability measure $\nu$ over $\S(\H)$, we define the action of the Markov kernel $\Pi$ on $\nu$ by
$$
\nu \Pi (A) = \int_{\S(\H)} \Pi(\rho, A)\,\mathrm{d}\nu(\rho),
\qquad A \in\fB.
$$
An invariant measure is a probability measure $\nu_\inv$ on $(\S(\H),\fB)$ that satisfies the equation:
$$\nu_\inv \Pi = \nu_\inv.$$
The first natural question is whether such a measure exists. The Markov--Kakutani theorem guarantees existence whenever the state space is compact and the Markov kernel is Feller. 
Recall that $\Pi$ acts on bounded measurable functions $f$ on $\S(\H)$ by
\[
\Pi f(\rho) \;=\; \int_{\S(\H)} f(\rho')\,\Pi(\rho,\mathrm{d}\rho'),
\]
and that $\Pi$ is Feller if it maps continuous functions to continuous ones, which holds immediately in our setting. Since $\mathcal{S}$ is compact, the assumptions of the theorem are satisfied, and the existence of an invariant probability measure follows. The main result of this section then concerns its uniqueness and the convergence toward it.

To prepare the statement of the theorem, we introduce the following definitions. 
The first one is the Wasserstein distance that we shall use to quantify convergence toward the invariant measure~$\nu_{\mathrm{inv}}$. 
For two probability measures $\nu_1,\nu_2$ on $(\S(\H),\fB)$, the $1$-Wasserstein distance is defined, via Kantorovich–Rubinstein duality, by
$$
W_1(\nu_1,\nu_2)
=\sup_{f\in \operatorname{Lip}_1(\S(\H))}\Bigg|
\int_{\S(\H)} f(\rho)\,\mathrm{d}\nu_1(\rho)
-\int_{\S(\H)} f(\rho)\,\mathrm{d}\nu_2(\rho)
\Bigg|,
$$
where
$$
\operatorname{Lip}_1(\S(\H))
=\big\{f:\S(\H)\to\mathbb{R}\ \text{s.t.}\ |f(\rho)-f(\sigma)|\le \norm{\rho - \sigma}\ \text{for all } \rho,\sigma\in \S(\H)\big\}.
$$

We now recall the notion of the period of an irreducible quantum channel, which is the natural analogue of the classical notion for Markov chains. 
Let $\Phi:\mathcal{B}(\mathcal{H})\to\mathcal{B}(\mathcal{H})$ be an irreducible quantum channel. 
The \emph{period} $\ell$ of $\Phi$ is defined as the largest integer $\ell \ge 1$ for which there exists an orthogonal decomposition
$$
\mathcal{H} = \bigoplus_{j=0}^{\ell-1} \mathcal{H}_j
$$
such that
$$
\Phi\big(\mathcal{B}(\mathcal{H}_j)\big)\subset \mathcal{B}(\mathcal{H}_{j+1 \bmod \ell})
\qquad \text{for all } j=0,\dots,\ell-1.
$$
In this case, $\Phi$ acts as a cyclic permutation on the subspaces $\mathcal{H}_j$. 
When $\ell=1$, the channel is said to be \emph{aperiodic}. The \emph{primitive} quantum channels are those which are irreducible and aperiodic.

Here is the main result of this section.
\begin{theorem}[Uniqueness and convergence to the invariant measure]\label{thm:invariant_measure_convergence}
Assume that $\irr$ and $\cont$ hold. Then the Markov kernel $\Pi$ admits a unique invariant probability measure $\nu_{\mathrm{inv}}$. Moreover, if $\ell$ denotes the period of the channel $\Phi$, then for any initial distribution $\nu$ on $(\S(\H),\fB),$
$$
\limn W_1\!\left(\frac{1}{\ell}\sum_{r=0}^{\ell-1}\nu\Pi^{\ell n+r},\,\nu_{\mathrm{inv}}\right)=0.
$$
\end{theorem}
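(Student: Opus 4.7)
The plan is to turn Theorem~\ref{thm:asymp_stab_imperfect} into a coupling argument, following the scheme of \cite{benoist2019invariant}. Existence of an invariant measure is given by Markov--Kakutani, as already noted, so the work is uniqueness and the Cesàro convergence. Write $\sigma_n^\nu := \tfrac{1}{\ell}\sum_{r=0}^{\ell-1}\nu\Pi^{\ell n+r}$ for the quantity to be analyzed. A key preliminary is that any $\Pi$-invariant measure $\mu$ has barycenter $\bar\mu:=\int\rho\,d\mu(\rho)$ satisfying $\Phi(\bar\mu)=\bar\mu$; under $\irr$ this forces $\bar\mu=\rho_\inv$.

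For uniqueness, I would take two invariant measures $\mu_1,\mu_2$ and, for each $i$, work on $(\Omega,\F,\mathbb{Q}_{\mu_i})$ with $\mathbb{Q}_{\mu_i}:=\int\P^{\rho_0}\,d\mu_i(\rho_0)$. Couple the true trajectory $(\rho_n)$ started from $\rho_0\sim\mu_i$ (marginal $\mu_i\Pi^n=\mu_i$) with the estimated trajectory $(\hat\rho_n)$ started from the full-rank state $\rho_\inv$. Since $\rho_0\ll\rho_\inv$ holds $\mu_i$-a.s., Theorem~\ref{thm:asymp_stab_imperfect} and dominated convergence yield $W_1$-convergence of marginals. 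The crucial identity
\[
\mathbb{Q}_{\mu_i}(\omega_1=i_1,\dots,\omega_n=i_n)=\tr{\Phi_{i_n}\circ\cdots\circ\Phi_{i_1}(\bar\mu_i)}=\tr{\Phi_{i_n}\circ\cdots\circ\Phi_{i_1}(\rho_\inv)}
\]
shows that the outcome law under $\mathbb{Q}_{\mu_i}$ equals $\P^{\rho_\inv}$, hence the marginal of $\hat\rho_n$ is $\delta_{\rho_\inv}\Pi^n$, independently of $i$. Therefore $W_1(\mu_i,\delta_{\rho_\inv}\Pi^n)\to 0$, which both forces $\mu_1=\mu_2=:\nu_\inv$ and yields $\delta_{\rho_\inv}\Pi^n\to\nu_\inv$ in $W_1$.

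For the Cesàro convergence with general $\nu$, I would perform three reductions via asymptotic stability. First, the same coupling plus a second use of Theorem~\ref{thm:asymp_stab_imperfect} on the pair $(\rho_n^{\bar\nu},\hat\rho_n^{\rho_\inv})$ under $\P^{\bar\nu}$ gives $W_1(\nu\Pi^n,\delta_{\bar\nu}\Pi^n)\to 0$, reducing to deterministic initial states. Second, for a deterministic $\rho\in\S(\H)$, the observation that off-diagonal blocks of $\rho$ carry zero trace and are mapped off-diagonal by every $\Phi_T$ gives $\P^\rho=\P^{\rho^D}$, where $\rho^D:=\sum_j\pi_j\rho\pi_j$ is the block-diagonal projection and $\pi_j$ is the projector onto $\H_j$; applying asymptotic stability with bridge $\rho_\inv$ to both $\rho$ and $\rho^D$ yields $\|\rho_n-\rho_n^D\|\to 0$ a.s. Third, writing $\rho^D=\sum_j p_j\tilde\rho^{(j)}$ with $\tilde\rho^{(j)}\in\S(\H_j)$ and $p_j=\tr{\pi_j\rho}$, the mixture identity $\P^{\rho^D}=\sum_j p_j\P^{\tilde\rho^{(j)}}$ together with one last application of asymptotic stability under each $\P^{\tilde\rho^{(j)}}$ shows that $\delta_{\rho^D}\Pi^n$ is $W_1$-close to $\sum_j p_j\delta_{\tilde\rho^{(j)}}\Pi^n$.

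The problem is then reduced to the Cesàro convergence for a deterministic $\tilde\rho\in\S(\H_j)$. Here $\Pi^\ell$ preserves $\S(\H_j)$ and is driven there by the primitive channel $\Phi^\ell|_{\B(\H_j)}$, whose unique invariant state is full-rank on $\H_j$. Asymptotic stability for $\Pi$ passes to the subsampled chain $\Pi^\ell|_{\S(\H_j)}$, and re-running the uniqueness argument of the second paragraph on this primitive pair produces a unique invariant measure $\mu^{(j)}$ on $\S(\H_j)$ with $\delta_{\tilde\rho}\Pi^{\ell n}\xrightarrow{W_1}\mu^{(j)}$. The cyclic identity $\mu^{(j)}\Pi=\mu^{(j+1)}$ both yields $\nu_\inv=\tfrac{1}{\ell}\sum_j\mu^{(j)}$ and gives $\delta_{\tilde\rho}\Pi^{\ell n+r}\xrightarrow{W_1}\mu^{(j+r)}$; averaging over $r\in\{0,\dots,\ell-1\}$ delivers $\sigma_n^{\tilde\rho}\to\nu_\inv$, and combining this with the three reductions concludes the proof. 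The main technical difficulty is this periodic block analysis; in the aperiodic case $\ell=1$ the block decomposition trivializes and the convergence is essentially immediate from the second paragraph.
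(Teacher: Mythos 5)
Your reductions are individually sound: the uniqueness argument via the bridge started from $\rho_{\inv}$, the passage from $\nu$ to $\delta_{\bar\nu}$, the block-diagonalization $\P^{\rho}=\P^{\rho^D}$, and the splitting into the $\S(\H_j)$ components all rest correctly on the observation that two initial laws with the same barycenter induce the same outcome law, hence the same marginal for the estimated trajectory. The gap is in the very last step. For a deterministic $\tilde\rho\in\S(\H_j)$ that is not the invariant state $\rho_{\inv}^{(j)}$ of the primitive restriction, ``re-running the uniqueness argument of the second paragraph'' does not yield $\delta_{\tilde\rho}\Pi^{\ell n}\xrightarrow{W_1}\mu^{(j)}$: that argument identifies the marginal of the bridge with $\delta_{\rho_{\inv}^{(j)}}\Pi^{\ell n}$ only because the driving outcome law is $\P^{\rho_{\inv}^{(j)}}$, which holds precisely when the initial distribution is invariant (so its barycenter is $\rho_{\inv}^{(j)}$). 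Under $\P^{\tilde\rho}$ the law of the bridge started from $\rho_{\inv}^{(j)}$ is \emph{not} $\delta_{\rho_{\inv}^{(j)}}\Pi^{\ell n}$, and the naive comparison of the two bridge laws via Lemma~\ref{lem:rho_mapsto_Prho_Lipschitz} only gives the non-vanishing bound $\|f\|_\infty\,\|\tilde\rho-\rho_{\inv}^{(j)}\|$. This is exactly where the paper's proof introduces the two-time-scale decomposition $n=p+q$: by the Markov property one compares the trajectory at time $q$ started from the law at time $p$ with a bridge driven by the (phase-averaged) outcome law of $\Phi^{\ell p}(\rho_\nu)$, and the discrepancy between that law and $\P^{\rho_{\inv}}$ is controlled, \emph{uniformly in $q$}, by Theorem~\ref{thm:convergence_iterates_quantum_channel} combined with Lemma~\ref{lem:rho_mapsto_Prho_Lipschitz}. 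Without this ingredient your final convergence claim is unsupported; weak-$*$ compactness would only give convergence of Cesàro averages over $n$, which is strictly weaker than the stated limit.

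A secondary remark: the entire block analysis (your steps involving $\rho^D$, the mixture over the $\H_j$, and the primitive restriction of $\Phi^\ell$) is avoidable. The paper works directly with the phase-averaged kernel $\bar\Pi_n=\tfrac1\ell\sum_{r=0}^{\ell-1}\Pi^{\ell n+r}$ and a single three-term estimate, using that $\tfrac1\ell\sum_{r=0}^{\ell-1}\Phi^{\ell p+r}(\rho_\nu)\to\rho_{\inv}$; this handles periodicity without ever decomposing $\H$, and it subsumes uniqueness as a corollary of the convergence rather than treating it separately.
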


To prove Theorem \ref{thm:invariant_measure_convergence}, we need some intermediate results, which we gather below. The proof will follow the same structure as that of \cite[Theorem~1.1]{benoist2019invariant} with assumption \prox replacing the purification assumption used there. These two assumptions play the same role, ensuring that the true trajectory and its estimate, constructed solely from the measurement outcomes, become asymptotically indistinguishable. This property is crucial, as it leads both to the uniqueness of the invariant measure and to the convergence of the trajectory toward it.

We now introduce the measurable structure and the extended probability measure \(\P_{\nu}\), which allow us to describe the distribution of the quantum trajectory starting from a random initial state.
We denote
$$\J_n = \fB \otimes \F_n \quad \text{and} \quad \J = \fB \otimes \F,$$
so that $(\S(\H) \times \Omega,\J)$ becomes a measurable space. Let $\nu$ be a probability measure on $(\S(\H),\fB)$. We extend it to a probability measure $\P_{\nu}$ on $(\S(\H) \times \Omega, \J)$ by setting
\begin{equation}\label{eq:def_proba_etendue}
    \P_{\nu}(B \times O) = \int_{B}  \P^{\rho}(O) \mathrm{d}\nu(\rho),
\end{equation}
for all $B \in \fB$ and $O\in \F$. By construction, the restriction of $\mathbb{P}_\nu$ to $\S(\H) \otimes \{\emptyset, \Omega\}$ coincides with $\nu$. The restriction of $\mathbb{P}_\nu$ to $\mathcal F$ also admits a standard expression.

For any probability measure $\nu$ on $\S(\H),$ define
$$\rho_\nu := \mathbb E_\nu [\rho_0] = \int_{\S(\H)}\rho_0 \mathrm d \nu (\rho_0).$$
Since $\S(\H)$ is convex, we indeed have $\rho_\nu\in \S(\H)$.

In particular, for any invariant measure $\nu_\inv,$ 
$$\rho_{\nu_\inv}=\E[\nu_\inv][]{\rho_0}=\E[\nu_\inv][]{\rho_1}=\E[\nu_\inv][]{\Phi(\rho_0)}=\Phi(\rho_{\nu_\inv}),$$ 
where the equalities follow from invariance of $\nu_\inv$ and linearity of $\Phi$. Thus by Assumption \irr, we obtain $\rho_{\nu_\inv} = \rho_\inv.$

\begin{proposition}[\cite{benoist2019invariant}, Proposition~2.1]\label{prop:same measure on ouctomes}
    The marginal of $\P_\nu$ on $\F$ is $\P^{\rho_\nu}$.
\end{proposition}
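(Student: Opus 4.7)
The plan is to reduce the equality of measures on $\F$ to an equality on the generating $\pi$-system of cylinder sets, and then use linearity of the trace and of the Kraus maps $\Phi_i$ together with the definition of $\rho_\nu$.

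First I would fix the candidate marginal: by definition of $\P_\nu$ via Eq.~\eqref{eq:def_proba_etendue}, for any $O\in\F$ we have
$$
\P_\nu\bigl(\S(\H)\times O\bigr)=\int_{\S(\H)} \P^{\rho}(O)\,\mathrm{d}\nu(\rho),
$$
so the marginal $\mu:=\P_\nu(\S(\H)\times\,\cdot\,)$ is a well-defined probability measure on $(\Omega,\F)$. The goal is then to show $\mu=\P^{\rho_\nu}$.

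Next I would test equality on cylinders. For any $k\in\N$ and $(i_1,\dotsc,i_k)\in\O^k$, by construction of $\P^\rho$,
$$
\P^{\rho}\bigl(C_{i_1,\dotsc,i_k}\bigr)=\tr{\Phi_{i_k}\circ\dotsb\circ\Phi_{i_1}(\rho)}.
$$
The map $\rho\mapsto\tr{\Phi_{i_k}\circ\dotsb\circ\Phi_{i_1}(\rho)}$ is affine and continuous in $\rho$, hence $\nu$-integrable, and by Fubini/linearity together with the definition $\rho_\nu=\int_{\S(\H)}\rho\,\mathrm{d}\nu(\rho)$ we obtain
$$
\mu\bigl(C_{i_1,\dotsc,i_k}\bigr)
=\int_{\S(\H)}\tr{\Phi_{i_k}\circ\dotsb\circ\Phi_{i_1}(\rho)}\,\mathrm{d}\nu(\rho)
=\tr{\Phi_{i_k}\circ\dotsb\circ\Phi_{i_1}(\rho_\nu)}
=\P^{\rho_\nu}\bigl(C_{i_1,\dotsc,i_k}\bigr).
$$

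Finally I would conclude by a monotone class / $\pi$-$\lambda$ argument: the family of cylinders is a $\pi$-system generating $\F$, both $\mu$ and $\P^{\rho_\nu}$ are probability measures agreeing on this $\pi$-system, so they coincide on $\F$. There is no genuine obstacle here; the only point to be careful about is the interchange of $\int\,\mathrm{d}\nu$ with the trace and the Kraus maps, which is immediate from their linearity and continuity together with the boundedness of $\S(\H)$.
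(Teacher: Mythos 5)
Your proof is correct and follows essentially the same route as the paper: the paper simply observes that $\rho\mapsto\P^{\rho}$ is affine and concludes in one line, while you unpack exactly that affineness by verifying the identity on the cylinder sets and extending by a $\pi$-$\lambda$ argument. Your version is slightly more detailed (and in fact justifies the affineness on all of $\F$, which the paper takes for granted), but the underlying idea is identical.
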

\begin{proof}
    By definition of $\P_\nu$, for any $O\in \F$, $\P_\nu(\S(\H)\times O)=\E[\nu][]{\P^\rho(O)}$. Since by definition $\rho\mapsto\P^\rho$ is affine, $\P_\nu(\S(\H)\times O)=\P^{\rho_\nu}(O)$ and the proposition is proved.
\end{proof}

In what follows, we will use the estimated trajectory \((\hat\rho_n)\) as a demonstration tool. Recall that it is the sequence of random variables defined, for each $n$, by
\[
\hat\rho_n(\omega)
  = \normalized{\Phi_{\omega_n} \circ \cdots \circ \Phi_{\omega_1}(\hat\rho_0)},
  \qquad \omega \in \Omega.
\]
 From now on, we choose a full-rank estimated initial state $\hat\rho_0 > 0$. In particular, this guarantees that the condition $\rho_0 \ll \hat\rho_0$ is automatically satisfied $\P_\nu$-almost surely, regardless of the distribution $\nu$ of $\rho_0$. This ensures the well-posedness of $(\hat\rho_n).$ 
 \medskip
 
The next corollary is a straightforward extension of Theorem~\ref{thm:asymp_stab_imperfect} to the case of random initial states.

\begin{corollary}\label{cor:convergence_estimate_under_P_nu}
    Suppose that $\irr$ and $\prox$ hold, and that $\hat\rho_0>0.$ Then, for any probability measure $\nu$ on $(\S(\H),\fB)$,
    \[
        \lim_{n\to\infty} F(\rho_n,\hat\rho_n) = 1 \quad \P_\nu\text{-a.s.}
    \]
\end{corollary}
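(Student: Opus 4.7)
The plan is to reduce the statement to Theorem~\ref{thm:asymp_stab_imperfect} by a conditioning argument based on the definition of $\P_\nu$ in Equation~\eqref{eq:def_proba_etendue}. First I would observe that since $\hat\rho_0>0$, we have $\ker(\hat\rho_0)=\{0\}\subset \ker(\rho_0)$ for every $\rho_0\in\S(\H)$, so the absolute-continuity condition $\rho_0\ll\hat\rho_0$ used in Theorem~\ref{thm:asymp_stab_imperfect} is automatically satisfied, regardless of the realization of the initial random state. Therefore, for each fixed $\rho_0\in\S(\H)$, Theorem~\ref{thm:asymp_stab_imperfect} applies and yields $\lim_n F(\rho_n,\hat\rho_n)=1$ $\P^{\rho_0}$-almost surely.

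Next I would introduce the event
\[
A=\bigl\{(\rho_0,\omega)\in\S(\H)\times\Omega \ :\ \lim_{n\to\infty} F(\rho_n(\omega),\hat\rho_n(\omega))=1\bigr\}.
\]
This set lies in $\J$ because, for each $n$, the map $(\rho_0,\omega)\mapsto F(\rho_n(\omega),\hat\rho_n(\omega))$ is jointly measurable: $\rho_n$ depends measurably on $\rho_0$ and on the finite word $(\omega_1,\ldots,\omega_n)$, $\hat\rho_n$ depends measurably on $\omega$, and the fidelity is continuous in both arguments. For each $\rho_0$ the section $A_{\rho_0}=\{\omega\,:\,(\rho_0,\omega)\in A\}$ belongs to $\F$ and, by the previous paragraph, satisfies $\P^{\rho_0}(A_{\rho_0})=1$.

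Finally I would apply the definition of the extended measure $\P_\nu$ given in Equation~\eqref{eq:def_proba_etendue}, which disintegrates $\P_\nu$ along $\nu$ and the kernels $\rho_0\mapsto \P^{\rho_0}$. This gives
\[
\P_\nu(A)=\int_{\S(\H)} \P^{\rho_0}(A_{\rho_0})\,\mathrm d\nu(\rho_0)=\int_{\S(\H)} 1\,\mathrm d\nu(\rho_0)=1,
\]
which is exactly the claim.

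There is essentially no obstacle here beyond ensuring that $A$ is jointly measurable and that the disintegration formula applies to a general event rather than only to a product $B\times O$; both are routine, the latter following from a standard monotone-class argument extending Equation~\eqref{eq:def_proba_etendue} from rectangles to all of $\J$. The corollary is therefore a direct transfer of Theorem~\ref{thm:asymp_stab_imperfect} through the kernel $\rho_0\mapsto\P^{\rho_0}$, made possible by the full-rank assumption on $\hat\rho_0$.
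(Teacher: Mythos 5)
Your proposal is correct and follows essentially the same route as the paper: apply Theorem~\ref{thm:asymp_stab_imperfect} for each fixed $\rho_0$ (using $\hat\rho_0>0$ to get $\rho_0\ll\hat\rho_0$ automatically) and then integrate over $\nu$ via the disintegration of $\P_\nu$ from Equation~\eqref{eq:def_proba_etendue}. The only difference is that you spell out the joint measurability of the limit event and the monotone-class extension, which the paper leaves implicit.
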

\begin{proof} Theorem \ref{thm:asymp_stab_imperfect} implies that, for any $\rho\in\S(\H)$, $\P^\rho(\lim_n F(\rho_n,\hat \rho_n)=1)=1.$ Moreover, by definition $\delta_\rho\otimes\P^{\rho}(\cdot)=\P_{\delta_\rho}(\cdot)$. Hence, for any $\rho\in \S(\H)$, $\P_{\delta_\rho}(\lim_n F(\rho_n,\hat \rho_n)=1)=1$. Therefore, for a general initial distribution $\nu$ on $(\S(\H),\fB),$ we obtain
\begin{equation*}\P_\nu(\lim_n F(\rho_n,\hat \rho_n)=1)=\E[\nu][]{\P_{\delta_\rho}(\lim_n F(\rho_n,\hat \rho_n)=1)}=1.\qedhere\end{equation*}
\end{proof}

The following result is a reformulation of the Perron–Frobenius theorem of Evans and H\o egh-Krohn \cite[Theorems~4.2~and~4.4]{evans1978spectral}, in the form given for instance in \cite[Theorem~3.3]{benoist2019invariant}.
\begin{theorem}\label{thm:convergence_iterates_quantum_channel}
Assume that \irr holds and let $\ell$ be the period of the channel $\Phi$. Then, for all $\rho \in \S(\H)$,
$$
\limn {\norm*[][]{\frac{1}{\ell}\sum_{r=0}^{\ell-1}\Phi^{\ell n+r}(\rho)\;-\;\rho_\inv}}=0.
$$
\end{theorem}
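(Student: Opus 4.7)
The plan is to derive the claim from a spectral decomposition of $\Phi$ furnished by the Evans--H\o egh-Krohn Perron--Frobenius theorem. Under \irr, \cite[Theorems~2.4 and~4.4]{evans1978spectral} give that the peripheral spectrum of $\Phi$ consists exactly of the $\ell$ distinct roots of unity $\zeta_k := e^{2\pi i k/\ell}$, $k=0,\ldots,\ell-1$, each an algebraically simple eigenvalue, while the rest of the spectrum lies strictly inside the open unit disk. Since $\B(\H)$ is finite-dimensional, the Dunford functional calculus then yields
$$
\Phi = \sum_{k=0}^{\ell-1} \zeta_k E_k + R,
$$
where the $E_k$ are pairwise orthogonal rank-one spectral projections with $\Phi E_k = \zeta_k E_k$, and $R$ commutes with each $E_k$, satisfies $R E_k = E_k R = 0$, and has spectral radius $r<1$.

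Iterating the decomposition and using $\zeta_k^{\ell n}=1$, one obtains
$$
\frac{1}{\ell}\sum_{s=0}^{\ell-1} \Phi^{\ell n+s}
= \sum_{k=0}^{\ell-1} \Bigl(\frac{1}{\ell}\sum_{s=0}^{\ell-1} \zeta_k^s \Bigr) E_k
+ \frac{1}{\ell}\sum_{s=0}^{\ell-1} R^{\ell n+s}.
$$
The inner geometric sum equals $\delta_{k,0}$, so every peripheral contribution except $E_0$ is killed, and the residual term is bounded in operator norm by $C r^{\ell n}\to 0$. It remains to identify $E_0$ on states: by \irr\ the $1$-eigenspace of $\Phi$ is spanned by $\rho_\inv$, and the $1$-eigenspace of the (unital) adjoint $\Phi^*$ is spanned by $\mathrm{Id}_\H$, hence $E_0^*(\mathrm{Id}_\H)=\mathrm{Id}_\H$. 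Consequently, for any $\rho\in\S(\H)$,
$$
\tr{E_0(\rho)} = \tr{E_0^*(\mathrm{Id}_\H)\,\rho} = \tr{\rho} = 1,
$$
which forces $E_0(\rho)=\rho_\inv$. Combining with the previous display gives the stated convergence in trace norm.

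The main obstacle is accurate bookkeeping of the Evans--H\o egh-Krohn input: one must check that \cite[Theorems~2.4 and~4.4]{evans1978spectral} give both the algebraic simplicity of every peripheral eigenvalue (needed for the bounded rank-one projections $E_k$ and for $R^m\to 0$) and the one-dimensionality of the $1$-eigenspaces of $\Phi$ and $\Phi^*$. Once these facts are cited, the argument collapses to the elementary averaging above and a two-line identification of $E_0$ via duality with the trace.
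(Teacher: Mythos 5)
Your proposal is correct. Note, however, that the paper does not actually prove this statement: it presents it as a known reformulation of the Evans--H\o egh-Krohn Perron--Frobenius theorem and points to \cite[Theorem~3.3]{benoist2019invariant} for this exact form, so there is no in-paper argument to compare against. What you have written is a legitimate standalone proof of that cited fact, resting on the same underlying input (the structure of the peripheral spectrum of an irreducible positive trace-preserving map: the $\ell$-th roots of unity, each algebraically simple, with the remaining spectrum strictly inside the unit disk). The averaging over one period to annihilate the non-trivial peripheral projections, and the identification $E_0(\rho)=\tr{\rho}\,\rho_{\mathrm{inv}}$ via $E_0^*(\mathrm{Id}_\H)=\mathrm{Id}_\H$, are both carried out correctly. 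Two minor points of bookkeeping: the bound $\|R^{m}\|\le C r^{m}$ is not quite right if $R$ has nontrivial Jordan blocks; you should write $\|R^{m}\|\le C_\varepsilon (r+\varepsilon)^{m}$ for any $\varepsilon$ with $r+\varepsilon<1$ (or allow a polynomial prefactor), which still gives the needed convergence to zero. And the precise attribution within \cite{evans1978spectral} of the statement that \emph{every} peripheral eigenvalue is algebraically simple (not just the eigenvalue $1$) deserves care, since that simplicity is what makes each $E_k$ rank one and forces $R E_k=E_k R=0$; this is indeed part of the Evans--H\o egh-Krohn cyclicity result, but it is the one input you should verify explicitly rather than treat as folklore.
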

We recall the following lemma from \cite{benoist2024exponentially} which will be useful in the proof of the convergence toward the invariant measure.
\begin{lemma}[Lemma~4.1 in \cite{benoist2024exponentially}]\label{lem:rho_mapsto_Prho_Lipschitz}
    The map $\rho \mapsto \P^\rho$ is Lipschitz. More precisely, for all states $\rho,\sigma \in \S(\H),$
    $$\left\|\mathbb{P}^\rho-\mathbb{P}^\sigma\right\|_{TV}:=\sup _{A \in \F}\left|\mathbb{P}^\rho(A)-\mathbb{P}^\sigma(A)\right| \leq  \norm[][]{\rho-\sigma}.$$
\end{lemma}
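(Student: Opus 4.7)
The plan is to first establish the bound on each finite-time $\sigma$-algebra $\F_n$ by a direct algebraic calculation, and then extend it to the full $\sigma$-algebra $\F$ by an approximation argument. The key ingredients will be complete positivity of each $\Phi_I$, trace-preservation of the channel $\Phi=\sum_{i\in\O}\Phi_i$, and the Jordan decomposition of the Hermitian operator $\rho-\sigma$.

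For the first step, observe that $\F_n$ is a finite $\sigma$-algebra whose atoms are the length-$n$ cylinders $C_I$, $I\in\O^n$. Every $A\in\F_n$ is a disjoint union of such atoms, and since $\P^\rho$ and $\P^\sigma$ are both probability measures (so the signed measure $\P^\rho-\P^\sigma$ has total mass zero on $\F_n$), a standard identity gives
$$
\sup_{A\in\F_n}|\P^\rho(A)-\P^\sigma(A)|=\tfrac12\sum_{I\in\O^n}|\P^\rho(C_I)-\P^\sigma(C_I)|=\tfrac12\sum_{I\in\O^n}|\tr{\Phi_I(\rho-\sigma)}|.
$$
To bound the right-hand side, I would decompose $\rho-\sigma=(\rho-\sigma)_+-(\rho-\sigma)_-$ into its positive and negative parts, so that $\|\rho-\sigma\|=\tr{(\rho-\sigma)_+}+\tr{(\rho-\sigma)_-}$. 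Because each $\Phi_I$ is completely positive, $\Phi_I((\rho-\sigma)_\pm)\ge 0$, and the triangle inequality yields
$$
|\tr{\Phi_I(\rho-\sigma)}|\le\tr{\Phi_I((\rho-\sigma)_+)}+\tr{\Phi_I((\rho-\sigma)_-)}.
$$
Summing over $I\in\O^n$ and using $\sum_{I\in\O^n}\Phi_I=\Phi^n$ together with the trace-preservation of $\Phi$, the sum collapses to $\tr{(\rho-\sigma)_+}+\tr{(\rho-\sigma)_-}=\|\rho-\sigma\|$. Combining the two displays, $\sup_{A\in\F_n}|\P^\rho(A)-\P^\sigma(A)|\le\tfrac12\|\rho-\sigma\|\le\|\rho-\sigma\|$, uniformly in $n$.

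For the final step, I would pass from $\bigcup_n\F_n$ to $\F=\bigvee_n\F_n$ by a standard approximation: since $\bigcup_n\F_n$ is an algebra generating $\F$ and both $\P^\rho,\P^\sigma$ are finite, for any $A\in\F$ and $\epsilon>0$ there exist $n$ and $B\in\F_n$ with $\P^\rho(A\triangle B)+\P^\sigma(A\triangle B)<\epsilon$, whence $|\P^\rho(A)-\P^\sigma(A)|\le|\P^\rho(B)-\P^\sigma(B)|+\epsilon\le\tfrac12\|\rho-\sigma\|+\epsilon$, and letting $\epsilon\to 0$ finishes the proof. There is no genuine analytic obstacle: the one point that requires attention is that $\rho-\sigma$ is Hermitian but not positive, so complete positivity cannot be applied to it directly; splitting it into its Jordan parts before summing over cylinders is what converts the cylinder-probability differences into a telescoping sum controlled by the trace norm.
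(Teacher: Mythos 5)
The paper does not prove this lemma; it imports it verbatim as Lemma~4.1 of \cite{benoist2024exponentially}, so there is no in-paper argument to compare against. Your proof is correct and is exactly the standard argument one would expect the cited reference to use: on each finite $\sigma$-algebra $\mathcal{F}_n$ the total variation reduces to $\tfrac12\sum_{I\in\mathcal{O}^n}|\operatorname{tr}(\Phi_I(\rho-\sigma))|$, the Jordan decomposition of the Hermitian operator $\rho-\sigma$ together with positivity of each $\Phi_I$ and trace preservation of $\Phi^n$ collapses this sum to $\tfrac12\|\rho-\sigma\|$ uniformly in $n$, and the monotone-class/approximation step lifts the bound from $\bigcup_n\mathcal{F}_n$ to $\mathcal{F}$. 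All steps are sound (in particular you correctly flag that positivity cannot be applied to $\rho-\sigma$ directly, which is the one place a careless argument would break), and you in fact obtain the sharper constant $\tfrac12\|\rho-\sigma\|$, which is consistent with, and stronger than, the stated bound.
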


We are now in a position to prove Theorem \ref{thm:invariant_measure_convergence}. We only prove the convergence as it implies the uniqueness of the invariant measure.
\begin{proof}[Proof of Theorem \ref{thm:invariant_measure_convergence}]
    Let $\nu_\inv$ be a $\Pi$-invariant probability measure.
    
    The variational definition of the Wasserstein distance is unaffected by a global shift of the functions $f$ by a constant. We can therefore restrict ourselves to $1$-Lipschitz functions that are equal to $0$ at some arbitrary $\sigma\in \S(\H)$. Since $\sup_{\rho,\sigma\in\S(\H)}\|\rho-\sigma\|=2$, for any of these functions their supremum norm $\|f\|_\infty$ is bounded by $2$. Thus, let $f \in \operatorname{Lip}_1(\S(\H))$ be arbitrary such that $\|f\|_\infty\leq 2$. 
    
    For convenience, we write, for all $n,$ $\bar \Pi_n := \frac1\ell \sum_{r=0}^{\ell-1} \Pi^{\ell n+r}.$
    For any $p,q$ such that $n = p + q,$
\begin{align}\label{eq:epsilon sur 3}
\begin{split}
|\E[\nu \bar \Pi_n][]{f(\rho_0)} -& \E[\nu_\inv][]{f(\rho_0)}|\\
=& \abs{\E[\nu \bar \Pi_p][]{f(\rho_q)} - \E[\nu_\inv][]{f(\rho_q)}} \\
\leq &\abs{\E[\nu \bar \Pi_p]{f(\rho_q)} - \E[\nu \bar \Pi_p]{f(\hat\rho_q)}} + \abs{\E[\nu \bar \Pi_p]{f(\hat\rho_q)} - \E[\nu_\inv]{f(\hat\rho_q)}}  \\
&\quad + \abs{\E[\nu_\inv]{f(\hat\rho_q)} - \E[\nu_\inv][]{f(\rho_q)}} \\
\leq& \E[\nu \bar \Pi_p]{\norm[]{\rho_q - \hat\rho_q}}  + \abs{\E[\nu \bar \Pi_p]{f(\hat\rho_q)} - \E[\nu_\inv]{f(\hat\rho_q)}}  + \E[\nu_\inv]{\norm[]{\rho_q - \hat\rho_q}}.
\end{split}
\end{align}
The first equality is a consequence of the Markov property. The first inequality is a consequence of triangle inequality. For the last inequality we used the fact that $f$ is $1$-Lipschitz. 

Let $\varepsilon >0.$ From Corollary \ref{cor:convergence_estimate_under_P_nu}, which holds for any 
probability measure $\nu$ over $\S(\H)$, we have $\P_\nu$-a.s. $F(\rho_n,\hat\rho_n)\xrightarrow[n\to\infty]{} 1$. Hence, by \cite[Proposition~5]{fuchs2002cryptographic},
$$\norm[]{\rho_n-\hat\rho_n}\xrightarrow[n\to\infty]{\P_\nu\mbox{-a.s.}} 0.$$
Using this convergence for $\nu=\nu\bar\Pi_p$ in the first term and for
$\nu=\nu_{\mathrm{inv}}$ in the third one, Lebesgue’s dominated convergence theorem implies that for any $p$ fixed and $q$ large enough,
\begin{equation}\label{eq:conv_estimates}
\E[\nu \bar \Pi_p]{\norm[]{\rho_q - \hat\rho_q}} < \varepsilon/3 \quad\mbox{and}\quad  \E[\nu_\inv]{\norm[]{\rho_q - \hat\rho_q}} <\varepsilon/3.
\end{equation}
For the second term, the $\F$-measurability of 
$\hat\rho_q$, Proposition~\ref{prop:same measure on ouctomes} and the identities 
$\overline{\rho}_p:=\rho_{\nu \bar \Pi_p} = \frac{1}{\ell}\sum_{r=0}^{\ell-1} \Phi^{\ell p+r}(\rho_\nu),$ and 
$\rho_{\nu_\inv}=\rho_\inv$ imply
$$\abs{\E[\nu \bar \Pi_p]{f(\hat\rho_q)} - \E[\nu_\inv]{f(\hat\rho_q)}}=\abs{\E[][\overline{\rho}_p]{f(\hat \rho_q)}-\E[][\rho_\inv]{f(\hat\rho_q)}}.$$
Then, the inequality $\abs{\E[][\overline{\rho}_p]{f(\hat \rho_q)}-\E[][\rho_\inv]{f(\hat\rho_q)}}\leq \|f\|_\infty\|\P^{\overline{\rho}_q}-\P^{\rho_\inv}\|_{TV}$ and Lemma~\ref{lem:rho_mapsto_Prho_Lipschitz} lead to
$$\abs{\E[\nu \bar \Pi_p]{f(\hat\rho_q)} - \E[\nu_\inv]{f(\hat\rho_q)}} \leq 2 \norm*[]{\frac1 \ell \sum_{r=0}^{\ell-1} \Phi^{\ell p+r}(\rho_\nu) - \rho_\inv}.$$
Remark that this bound is uniform in $q$. Then using Theorem \ref{thm:convergence_iterates_quantum_channel}, for $p$ large enough,
\begin{equation*}\label{eq:conv TV}
\sup_{q\in\N}\abs{\E[\nu \bar \Pi_p]{f(\hat\rho_q)} - \E[\nu_\inv]{f(\hat\rho_q)}} < \varepsilon/3.
\end{equation*}
Fix $p\in \N$ such that this bound holds. Then, for any $n\in \N$ large enough Equation~\ref{eq:conv_estimates} holds for $q=n-p$. It follows from Equation~\ref{eq:epsilon sur 3} that, for any large enough $n\in \N$,
$$W_1\left(\frac{1}{\ell}\sum_{r=0}^{\ell-1}\nu\Pi^{\ell n+r},\,\nu_{\mathrm{inv}}\right) < \varepsilon,$$
which concludes the proof.
\end{proof}

We finish this article by stating a consequence of the last theorem, a strong law of large numbers for arbitrary continuous functions. Its proof is a straightforward adaptation of the one of \cite[Theorem~3.1]{benoist2023limit}, and is therefore omitted.
\begin{theorem}[Convergence of the ergodic mean]\label{thm:convergence_of_the_ergodic_mean} 
    Let $g$ be a continuous function on $\S(\H).$ Then, for any probability measure $\nu$ on $\S(\H),$
    $$\frac{g(\rho_1)+\dotsb+g(\rho_n)}{n} \xrightarrow[n\to\infty]{\P_\nu\mbox{-a.s.}}\mathbb E_{\nu_{\inv}}[g].$$
\end{theorem}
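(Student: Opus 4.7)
My approach is to first establish the ergodic theorem under the stationary measure $\P_{\nu_\inv}$ via Birkhoff, and then bootstrap to an arbitrary initial distribution $\nu$ by inserting the estimated trajectory as a coupling device. Under $\P_{\nu_\inv}$ the process $(\rho_n)$ is a stationary Markov chain, and Theorem~\ref{thm:invariant_measure_convergence} provides uniqueness of $\nu_\inv$ in the convex set of $\Pi$-invariant probability measures, making it trivially extremal. By the standard identification of extremal invariant distributions with ergodic shift-invariant measures on path space, the stationary chain is then ergodic. Birkhoff's ergodic theorem applied to the bounded (since $\S(\H)$ is compact) continuous function $g$ yields
\[
\frac{1}{n}\sum_{k=1}^n g(\rho_k) \xrightarrow[n\to\infty]{\P_{\nu_\inv}\mbox{-a.s.}} \mathbb E_{\nu_\inv}[g],
\]
with no obstruction from the period $\ell$ of $\Phi$, since Birkhoff handles Ces\`aro averages without regard to periodicity.

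Next I would transfer this convergence to the estimated trajectory $(\hat\rho_n)$ built from some fixed full-rank $\hat\rho_0 > 0$. Since $g$ is continuous on a compact set it is uniformly continuous, and Corollary~\ref{cor:convergence_estimate_under_P_nu} applied with $\nu=\nu_\inv$, together with the Fuchs--van de Graaf inequality \cite{fuchs2002cryptographic}, gives $\|\rho_n - \hat\rho_n\| \to 0$ $\P_{\nu_\inv}$-a.s. Hence $|g(\rho_n) - g(\hat\rho_n)| \to 0$ $\P_{\nu_\inv}$-a.s., and the same Ces\`aro limit holds for $(\hat\rho_n)$. Because $\hat\rho_0$ is fixed, the Ces\`aro average of the $g(\hat\rho_k)$ is $\F$-measurable, so Proposition~\ref{prop:same measure on ouctomes} lets me view this as a statement about $\P^{\rho_\inv}$:
\[
\frac{1}{n}\sum_{k=1}^n g(\hat\rho_k) \xrightarrow[n\to\infty]{\P^{\rho_\inv}\mbox{-a.s.}} \mathbb E_{\nu_\inv}[g].
\]

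Finally, to handle an arbitrary $\nu$, I would exploit absolute continuity. By \irr the state $\rho_\inv$ is full rank, so $\rho_\nu \leq c\,\rho_\inv$ for some $c > 0$; linearity of the trace along Kraus compositions then gives $\P^{\rho_\nu}(C) \leq c\,\P^{\rho_\inv}(C)$ on every cylinder, hence $\P^{\rho_\nu} \ll \P^{\rho_\inv}$. The preceding $\P^{\rho_\inv}$-a.s. convergence therefore also holds $\P^{\rho_\nu}$-a.s., and by Proposition~\ref{prop:same measure on ouctomes} once more, $\P_\nu$-a.s. A last invocation of Corollary~\ref{cor:convergence_estimate_under_P_nu} for the given $\nu$, combined with uniform continuity of $g$, allows me to swap $\hat\rho_k$ back for $\rho_k$ in the Ces\`aro average, yielding the claim. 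The main delicate point throughout is the ergodicity assertion in the first step: it rests on the equivalence, for stationary Markov chains, between extremality of the invariant distribution and ergodicity of the induced path-space measure, where uniqueness (Theorem~\ref{thm:invariant_measure_convergence}) is what makes the argument go through. Everything else is a careful but routine chain of couplings between $\P_\nu$, $\P^{\rho_\nu}$, $\P^{\rho_\inv}$ and $\P_{\nu_\inv}$.
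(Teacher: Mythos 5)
Your proposal is correct and follows essentially the same route as the paper's proof, which is omitted there precisely because it is the adaptation of \cite[Theorem~3.1]{benoist2023limit} that you reconstruct: ergodicity of the stationary chain from uniqueness (hence extremality) of $\nu_\inv$ plus Birkhoff, then transfer to an arbitrary initial law via the estimated trajectory, the marginal identification of Proposition~\ref{prop:same measure on ouctomes}, and the absolute continuity $\P^{\rho_\nu}\ll\P^{\rho_\inv}$ coming from $\rho_\inv>0$. All the ingredients you invoke (Corollary~\ref{cor:convergence_estimate_under_P_nu}, the Fuchs--van de Graaf inequality, uniform continuity of $g$) are used for exactly the purposes you describe.
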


\bigskip

\noindent\textbf{Acknowledgements.} The research of TB was partly funded by ANR project DYNACQUS, grant number ANR-24-CE40-5714. This work was supported by the ANR projects Q-COAST (ANR-19-CE48-0003) and
IGNITION (ANR-21-CE47-0015).

\noindent\textbf{Conflicts of interest.} The authors declare no conflict of interest with respect to the present article.

\bibliographystyle{alpha} 
\bibliography{refs} 
\end{document}